\documentclass[preprint,12pt]{elsarticle}

\usepackage{amsmath, amssymb, amsthm, mathtools, bm}
\usepackage{graphicx}
\usepackage{caption}
\usepackage{hyperref}
\usepackage{lmodern}
\usepackage[margin=1in]{geometry}
\usepackage{physics}
\usepackage{siunitx}
\usepackage{subcaption}
\usepackage{booktabs}
\usepackage{array}
\usepackage{microtype}
\usepackage{enumitem}
\usepackage[table]{xcolor}
\newcommand{\tblhead}[1]{{\footnotesize\textbf{#1}}}
\usepackage{cleveref}
\usepackage{tikz}
\usetikzlibrary{arrows.meta,patterns,decorations.markings,calc}
\usepackage{amssymb}
\usepackage{multirow}
\usepackage{tabularx}            
\newcolumntype{M}{>{$}l<{$}}     
\newcolumntype{Y}{>{\raggedright\arraybackslash}X}  
\usepackage{bbm}
\usepackage{subcaption} 
\renewcommand{\arraystretch}{1.2}                   

\numberwithin{equation}{section}

\theoremstyle{plain}
\newtheorem{theorem}{Theorem}[section]
\newtheorem{lemma}[theorem]{Lemma}
\newtheorem{proposition}[theorem]{Proposition}

\theoremstyle{definition}

\newtheorem{assumption}[theorem]{Assumption}
\usepackage{algorithm}
\usepackage[noend]{algpseudocode}
\usepackage{multicol}

\usepackage{float}              
\usepackage[section]{placeins}  

\newcommand{\R}{\mathbb{R}}
\newcommand{\C}{\mathbb{C}}

\newcommand{\E}{\mathbb{E}}

\newcommand{\HH}{^{\mathsf{H}}} 
\newcommand{\Had}{\odot}        

\newcommand{\rfft}{\mathrm{rfft2}}
\newcommand{\irfft}{\mathrm{irfft2}}
\newcommand{\FFT}{\mathcal{F}}
\newcommand{\IFFT}{\mathcal{F}^{-1}}
\newcommand{\LL}{\mathrm{LL}}
\newcommand{\LH}{\mathrm{LH}}
\newcommand{\HL}{\mathrm{HL}}
\newcommand{\HHH}{\mathrm{HH}}  

\DeclareMathOperator{\dist}{dist}

\journal{Engineering Applications of Artificial Intelligence}

\begin{document}

\begin{frontmatter}



\title{A Physics-Informed Fourier-Wavelet Transformer for Multiscale Computational Fluid Dynamics Surrogate Modeling}


\author[1]{Somyajit Chakraborty}
\ead{chksomyajit@sjtu.edu.cn}

\author[1]{Ming Pan}
\ead{panming@sjtu.edu.cn}

\author[1]{Xizhong Chen\corref{cor1}}
\ead{chenxizh@sjtu.edu.cn}
\cortext[cor1]{Corresponding author}

\affiliation[1]{organization={Shanghai Jiao Tong University},
            department={Department of Chemistry and Chemical Engineering},
            addressline={800 Dongchuan Road, Minhang District},
            city={Shanghai},
            postcode={200240},
            country={China}}

\begin{abstract}
Physics-informed surrogate models can accelerate computational fluid dynamics simulations. However, many existing methods reproduce global flow patterns more reliably than localized multiscale structures. This study presents a physics-informed Fourier-wavelet transformer for next-step velocity-field reconstruction in real-world flow benchmarks. The proposed formulation combines hybrid Fourier-wavelet spectral encoding with physics-biased self-attention based on partial differential equation residual diagnostics. It also uses self-supervised pretraining through masked physics prediction and equation consistency prediction. The experiments are conducted on two real benchmark cases: cylinder-wake flow and fluid-structure interaction. All approaches are evaluated under a shared local protocol and compared with spectral, transformer-based, operator-learning, and physics-informed neural-network baselines. On the cylinder-wake benchmark, the proposed model achieves the best aggregate accuracy, with an all-channel normalized mean-squared error of \(0.05875\) and an all-channel Pearson correlation coefficient of \(0.97019\). On the fluid-structure-interaction benchmark, it again gives the lowest all-channel normalized mean-squared error of \(2.70\times10^{-4}\), compared with \(4.02\times10^{-4}\) for the strongest baseline. Component-wise field comparisons and scale-separated diagnostics further show stronger recovery of localized wake structures, including near-body, wake-core, and far-wake features. This indicates that the proposed model captures turbulence-associated multiscale flow behavior more effectively than the compared baselines. Overall, these results show that the proposed model improves real-world flow reconstruction while maintaining a practical accuracy-cost tradeoff.
\end{abstract}


\begin{keyword}
Physics-informed artificial intelligence \sep Transformers \sep Surrogate modeling \sep Computational fluid dynamics \sep Neural operators \sep Multiscale flow reconstruction
\end{keyword}



\end{frontmatter}


\section{Introduction} \label{sec:intro}
Partial differential equations (PDEs) lie at the heart of scientific computing, governing dynamic processes across disciplines including fluid mechanics, climate modeling, materials engineering, and chemical reaction systems. A central example is computational fluid dynamics (CFD), where resolving multiscale flow systems often requires fine spatial meshes and small time steps. Many of these systems are characterized by multiscale phenomena, where solution behavior varies across widely separated spatial and temporal scales. For example, in subsurface transport, slow background flow coexists with sharp concentration fronts, while in turbulence, coherent structures span orders of magnitude in frequency and scale. Traditional numerical solvers, while highly accurate, often require prohibitively fine meshes and small time steps to resolve such behavior \cite{liu2022hierarchical}. As a result, solving multiscale PDEs with conventional methods can become computationally expensive, especially when parameter sweeps, uncertainty quantification, or real-time inference are needed.

To alleviate this burden, the field has increasingly turned to machine learning-based surrogates that aim to approximate PDE solutions with learned models. Recent engineering surrogate-modeling studies have shown that physics-informed surrogates can improve extrapolative flow prediction, while neural-operator transformers can provide high-fidelity surrogates for time-dependent nonlinear PDEs~\cite{wang2025cardiovascular,liu2026sequential}. Among these approaches, \textit{physics-informed neural networks} (PINNs) have received significant attention \cite{raissi2024physics}. By embedding the governing equations into the loss function, PINNs enforce physical laws without requiring large datasets, offering a mesh-free alternative to finite difference or finite element methods. However, while elegant in theory, vanilla PINNs face major practical challenges. Their basic architecture primarily involves a single multilayer perceptron (MLP) acting on coordinate inputs. This often limits the model's capacity to represent complex solution manifolds \cite{yi2025transforming}. Training becomes especially fragile when addressing stiff systems, chaotic attractors, or nonlocal dependencies, due to gradient pathologies and ill-conditioned loss landscapes \cite{zhang2024physics}. These issues are further compounded in multiscale settings, where the model must learn to represent both global structure and localized, high-frequency features from the same signal. Despite numerous enhancements (e.g., adaptive loss balancing \cite{rui2023reconstruction}, domain decomposition, causal training \cite{penwarden2023unified}), PINNs still struggle to scale to high-dimensional or highly nonlinear PDEs \cite{abbasi2025challenges}.

Parallel studies in \textit{neural operator learning} have proposed an alternative formulation -- rather than solving a specific PDE instance, neural operators learn mappings between function spaces, enabling generalization across a family of problems \cite{kovachki2023neural}, \cite{serrano2023operator}, \cite{raonic2023convolutional}. Models such as the Fourier Neural Operator (FNO) have demonstrated success in learning parametric solution operators, with applications in weather forecasting, fluid dynamics, and porous media flows \cite{wen2022u}. By leveraging global spectral representations, these methods achieve zero-shot super-resolution and fast inference. Nonetheless, operator-based models are largely data-driven. They often require substantial training data and do not explicitly incorporate physics beyond solution input-output pairs \cite{sinha2025effectiveness}. Furthermore, their global representations may overlook localized effects or fine-scale structures unless augmented with specialized embeddings \cite{wang2024recent}.


Against this backdrop, transformer architectures have emerged as a promising bridge between expressiveness and inductive bias. Originally developed for sequence modelling in natural language processing, transformers excel at capturing long-range dependencies through self-attention mechanisms \cite{luo2023self}. In the context of PDE learning, this property makes them well-suited for modelling interactions across distant spatial or temporal locations. This is crucial in systems where far-field behavior is influenced by localised dynamics or boundary conditions. Early applications of transformers in physics-informed learning such as PINNsFormer in 2023 \cite{zhao2023pinnsformer}, and more recently PITT \cite{Lorsung_2024}, have demonstrated improvements over MLP-based PINNs and even operator networks, particularly in time-dependent or chaotic regimes. Furthermore, recent advances in self-supervised masked pretraining for PDEs suggest that transformer-based models can learn general-purpose latent representations of physics, enabling transfer to new equations or domains \cite{wang2024recent}.

Despite these promising developments, current transformer-based PDE frameworks often remain limited in scope. Most focus on a single enhancement such as temporal attention or next token prediction. Moreover, few address the multiscale nature of PDEs head-on by explicitly modeling both local and global structures in the input field. There remains a clear gap: how can we design a transformer-based model that (i) faithfully captures multiscale solution features, (ii) respects the structure of the governing PDEs, and (iii) learns generalizable physics priors without heavy reliance on labeled data?

In this work, we introduce the Physics-Informed Bidirectional Encoder Representation Transformer, abbreviated as PIBERT, to address this challenge. Our core objective is to develop a unified framework that embeds multiscale physics knowledge directly into the architecture, training strategy, and inference pipeline of a transformer model. PIBERT is designed from the ground up to handle systems with rich multiscale behavior and complex domain geometries, aiming to serve as a high-fidelity, generalizable surrogate for PDE simulations.
To this end, our research is guided by the following questions:

\begin{itemize}
    \item \textbf{RQ1:} Can a hybrid spectral representation (combining Fourier and Wavelet embeddings) improve the model's ability to capture both global structure and fine-scale local dynamics in PDE solutions?
    \item \textbf{RQ2:} How can we incorporate the geometry and operator structure of PDEs into the transformer's attention mechanism to bias it toward physically meaningful interactions?
    \item \textbf{RQ3:} Does self-supervised pretraining on physics-inspired tasks---such as masked point prediction and edge continuity prediction---enable more robust generalization, especially in data-scarce or extrapolative regimes?
\end{itemize}

Through these questions, we seek to improve predictive accuracy and further advance the interpretability, scalability, and trustworthiness of physics-informed deep learning. The main empirical study is performed on the real cases of RealPDEBench cylinder-wake and fluid-structure-interaction-real (FSI) benchmarks \cite{hu2026realpdebench}, evaluated under shared local next-step velocity-prediction protocols with explicit provenance, multiscale diagnostics, and cost disclosure. Additional supplementary analyses on CFDBench cylinder, tube, and cavity cases~\cite{luo2023cfdbench}, a fluorocarbon ICP plasma dataset~\cite{daly2023icp}, and the EAGLE turbulent-flow dataset~\cite{janny2023eagle} are included to examine whether the same architectural behavior persists across broader flow and physics settings. These supplementary studies are used as supporting evidence for robustness and multiscale representation quality, while the primary benchmark ranking is based on the RealPDEBench protocols.

We next review recent advances in physics-informed learning and operator surrogates to situate our contribution (\Cref{sec:related}). We then introduce \textsc{PIBERT}, including its hybrid Fourier-wavelet encoder, physics-biased attention mechanism, masked physics prediction (MPP), equation consistency prediction (ECP) pretraining strategy and summarize the supporting mathematical analysis (\Cref{sec:pibert,app:extended-proofs}). Datasets, the evaluation protocol, and reproducibility details are presented in \Cref{sec:method}. We then report results on cylinder and FSI in \Cref{sec:bench}, discuss implications and limitations in \Cref{sec:implications}, and conclude in \Cref{sec:conclusion}. Benchmark provenance and local learning pipelines are documented in \Cref{sec:pde-data}. The supplementary material provides additional supporting evidence across two levels. Sections S1--S3 report broader analyses on CFDBench, ICP Plasma, EAGLE, Tube, and Cavity cases, including cross-dataset comparisons, embedding-physics alignment, and spectral diagnostics. Sections S4--S5 provide additional RealPDEBench diagnostics for cylinder-real and FSI-real, including multiscale summaries, temporal traces, timestep predictions, and additional cross-sections.

This organization addresses three issues that frequently weaken surrogate-model comparisons: contour-only evidence, ambiguous benchmark provenance, and missing cost or scale-separated analysis. Accordingly, the paper places aggregate quantitative evidence first and treats visual panels as supporting examples rather than decisive evidence. 
It also separates benchmark-source facts from the local tensorized learning protocol used by the models.

\section{Related Works}\label{sec:related}

Over the last decade, the integration of deep learning with physical modeling has become a transformative approach in scientific computing, particularly for solving complex partial differential equations (PDEs). This integration has sparked the development of a wide array of physics-informed machine learning (PIML) techniques, which have evolved in parallel with advancements in deep learning architectures, particularly neural networks, transformers, and self-supervised learning. In this section, we explore the key recent developments (2023--2025) in the field, emphasizing the challenges and innovations that have led to the creation of frameworks like PIBERT.

\subsection{Physics-Informed Neural Networks (PINNs) and Early Challenges}

PINNs, introduced by Raissi et al.~\cite{raissi2019physics} in 2019, were a groundbreaking development that integrated physics constraints directly into the loss function of neural networks to solve PDEs. These networks utilize the physics of the problem (e.g., conservation laws, boundary conditions) to inform the training process. In engineering field prediction, Roy and Guha~\cite{roy2023data} showed that physics-constrained deep learning can embed mechanical constraints through multi-objective loss terms. Despite their early success, recent reviews by Raissi et al.~\cite{raissi2024physics} and Abbasi et al.~\cite{abbasi2025challenges} highlighted several limitations of PINNs, including difficulty handling stiff equations, poor performance with shock capturing, and struggles with multiscale phenomena. Recent studies have therefore explored residual-based attention and other adaptive weighting mechanisms to improve the stability and accuracy of physics-informed training~\cite{anagnostopoulos2024residual}. These issues are partly due to the point-wise evaluation of PINNs, which make it challenging to model long-range dependencies in spatial and temporal domains. PINNs also fail to leverage the full spectrum of physical symmetries inherent in the problems they aim to solve.

\subsection{Neural Operators for Parametric PDEs}

A promising advancement beyond PINNs is the development of \textit{neural operators}, which aim to improve generalization across a wide range of physical configurations. An early milestone in operator learning is \textit{DeepONet}, which models mappings between function spaces with a branch--trunk architecture. This provided early evidence for the universal approximation of nonlinear operators from sparse sensor measurements~\cite{lu2019deeponet}. This operator-centric viewpoint set the stage for later neural-operator designs. Two recent studies have also extended this direction through vision-transformer operators and geometry-aware neural operators, showing that attention-based and geometry-conditioned architectures are increasingly central to operator learning for PDE-governed fields~\cite{ovadia2024vito,zhong2025geometry}.

The Fourier Neural Operator (FNO), introduced by Li et al.~\cite{li2020fourier}, marked a paradigm shift by learning mappings between function spaces instead of point-wise solutions, thus offering better generalization across different boundary conditions and physical parameters. FNO and its variants have seen significant enhancements, such as the U-FNO by Wen et al.~\cite{WEN2022104180}, which incorporated multiphase flow problems, and physics-embedded FNOs developed by Xu et al.~\cite{xu2024physics} that integrate physics constraints directly within the Fourier layers. These developments demonstrate improved flexibility and efficiency in solving parametric PDEs.

The move to wavelet-based methods has also been significant in dealing with localized features and discontinuities. \textit{Deep Wavelet Neural Networks} (DWNNS) introduced by Li et al.~\cite{math10121976} leverage wavelet bases for the solution of PDEs with sharp discontinuities. More recent advancements by Su et al.~\cite{Su_Ma_Tong_Xu_Chen_2024} integrated \textit{multiscale attention wavelet operators}, which proved effective in biochemical systems with steep gradients, thus broadening the scope of spectral methods in PIML. However, the trade-off between global pattern recognition (via Fourier-based methods) and local feature extraction (via wavelets) remains an ongoing challenge. Hybrid models that can balance these two domains are now a key area of research. 

Recent advances have further integrated spectral representations into deep learning frameworks for PDEs. FourierFlow addresses spectral bias in fluid dynamics through a generative framework that incorporates frequency-aware weighting and surrogate feature alignment, demonstrating improved performance in turbulence modeling \cite{wang2026fourierflow}. While primarily designed for generative forecasting, its architecture emphasizes explicit control over frequency components—a direction complementary to our hybrid spectral embedding approach. Similarly, WaveDiff leverages wavelet transforms within a diffusion-based framework to enable high-fidelity super-resolution of PDE solutions, exploiting the multi-scale localization properties of wavelets for enhanced detail recovery \cite{hu2025wavelet}. These works highlight the growing importance of incorporating domain-specific signal priors—such as scale separation and frequency structure—into neural solvers. In contrast to these methods, PIBERT unifies both Fourier and wavelet representations within a single transformer architecture, enabling simultaneous global and local field modeling, while enforcing physical consistency through physics-informed attention and self-supervised pretraining.

\subsection{Transformers in Scientific Computing}

Transformers, originally developed for natural language processing (NLP) tasks, have increasingly been adapted for scientific computing, particularly for solving PDEs and modeling long-range dependencies in physical systems. For example, Hemmasian and Barati Farimani~\cite{hemmasian2024multiscale} used transformer architectures for multi-scale PDE time-stepping, demonstrating their ability to reduce accumulated temporal error in dynamical systems. Recent work by Lorsung et al.~\cite{Lorsung_2024} also introduced the \textit{Physics-Informed Token Transformer (PITT)}, which applies self-attention mechanisms to PDE solution fields. This architecture is specifically designed to capture spatiotemporal dependencies across large datasets, demonstrating significant improvements in modeling long-range correlations in systems such as fluid dynamics and heat transfer. More recently, Liu et al.~\cite{liu2026sequential} introduced a sequential neural-operator transformer for high-fidelity surrogates of time-dependent nonlinear PDEs, reinforcing the relevance of transformer-operator architectures for engineering simulation. However, PITT and similar approaches have encountered computational challenges in scaling to high-resolution problems, with attention complexity growing quadratically.

The work of Luo et al.~\cite{luo2023self} further refined this by introducing \textit{physics-aware attention}, which modulates attention weights to respect the physical symmetries of the underlying system. This method ensures that the model better adheres to principles such as conservation laws and energy balance, thus making the model more physically interpretable and reliable. Additionally, Yang et al.~\cite{yang2022learning} explored combining autoencoders with attention mechanisms, demonstrating that enforcing physical priors (such as known dynamics or boundary conditions) within transformer architectures can substantially improve generalization in high-dimensional physical systems.

One critical insight from these works is that while transformers are highly effective in capturing long-range dependencies, they often fail to respect the inherent physical structures of scientific problems unless explicitly designed to do so. Recent function-space analysis strengthens this view by formulating attention directly as an operator between infinite-dimensional spaces, providing a principled bridge between transformer interactions and neural-operator learning \cite{calvello2025continuum}. Related operator-theoretic analysis has also shown that transformer-style architectures can be interpreted in projection-based terms, which helps explain why attention mechanisms can be meaningful for PDE operator approximation beyond sequence modeling analogies \cite{cao2021choose}. Related studies have also advocated attention mechanisms with stronger physical structure or operator-aware inductive bias, which can improve transformer performance on PDE problems by better aligning token interactions with the underlying physics \cite{wu2024transolver,calvello2025continuum}.

\subsection{Self-Supervised Learning for Physical Systems}

The application of \textit{self-supervised learning} (SSL) in physical systems is an area that has rapidly gained attention, particularly for leveraging unlabeled simulation data. In 2023, Berend et al.~\cite{berend2023masked} demonstrated that \textit{masked latent semantic modeling}, a technique inspired by SSL in NLP, could be adapted to pre-train models for physical systems. This work marked a significant shift toward unsupervised learning approaches in PIML, offering the potential to leverage abundant unlabeled data from simulations and experiments.

In 2025, Garnier et al.~\cite{garnier2025meshmask} proposed the \textit{Mesh-Mask} framework, which integrates masked graph neural networks (GNNs) for physics-based simulations. This approach was particularly effective for incomplete observational data, allowing models to learn physical consistency directly from the structure of the data. By learning the underlying patterns of physical systems without relying on labeled training data, this method improves robustness, especially when dealing with sparse or noisy datasets.

The development of masked prediction models in physical systems represents a crucial step forward, as it enables the model to generate physically consistent outputs even when labeled data is limited or unavailable. These innovations point to a new era of self-supervised pretraining for scientific machine learning, with the potential to reduce dependence on labeled datasets and broaden the practical use of simulation and experimental data.

\subsection{Benchmarking and Evaluation Frameworks}

As the PIML field matures, standardized benchmarking and evaluation frameworks have become essential for comparing different models. The CFDBench benchmark, introduced by Luo et al.~\cite{luo2023cfdbench}, remains an important synthetic testbed for spatiotemporal generalization in fluid dynamics and helped establish shared evaluation practice for surrogate models.

More recent benchmark design has moved toward paired real-world datasets with documented release protocols. RealPDEBench extends this direction by assembling complex physical systems with measured or experimentally grounded observations together with official split metadata \cite{hu2026realpdebench}. For such benchmarks, clear reporting requires a clear separation between \emph{benchmark-source facts} and the \emph{local learning representation} actually passed to the models. That distinction directly affects how modality, spatial resolution, split usage, and the availability of numerical details should be reported.

Building on this, Wang et al.~\cite{wang2024recent} emphasized that credible CFD benchmarking should evaluate both accuracy and efficiency under transparent reporting conventions. In practice, this means that convincing surrogate-model comparisons should not rely on contour agreement alone: they should disclose split provenance, target channels, scale-separated diagnostics, and any limits on cost comparability or baseline reproduction.

\subsection{Bridging Current Gaps}

Despite these advances, several critical gaps remain in the literature. First, while methods like FNO and PITT have demonstrated effectiveness in capturing global structures or long-range dependencies, they often fail to adequately capture localized features or sharp discontinuities. Second, while self-supervised learning holds promise for enhancing model robustness, few methods have been developed specifically for the unique challenges of physical systems, such as maintaining physical consistency in learned representations.

PIBERT, introduced in this work, addresses these gaps by combining Fourier and wavelet embeddings to capture both global smooth structures and sharp localized features. In addition, PIBERT incorporates physics-constrained attention mechanisms, which bias interactions toward physically meaningful patterns, ensuring that the model respects the symmetries and conservation laws inherent in the physical system. Furthermore, PIBERT’s novel self-supervised pretraining objectives, such as Masked Physics Prediction and Equation Consistency Prediction, are specifically designed to address the challenges of physics-informed learning, providing a framework that is both scalable and robust.

The central empirical question is therefore narrower and more concrete: under a shared local protocol on real-world benchmarks, does this architecture provide consistent gains in aggregate accuracy, local multiscale fidelity, and physically meaningful diagnostics, and how do those gains trade off against optimization cost relative to recent baselines?

\section{PIBERT}
\label{sec:pibert}

\subsection{Mathematical Foundations of BERT}
BERT (Bidirectional Encoder Representations from Transformers), introduced in 2019, is a language representation model based on deep bidirectional self-attention that captures contextual relationships between tokens \cite{devlin2019bert},\cite{koroteev2021bert}. Unlike unidirectional models, BERT employs a masked language model objective whereby a portion of the tokens in the input are randomly masked and the model is trained to predict the original token from its surrounding context. The architecture is based on the Transformer encoder and relies on the self-attention mechanism that is mathematically defined as
\[
\text{Attention}(\mathbf{Q}, \mathbf{K}, \mathbf{V}) = \text{softmax}\left(\frac{\mathbf{Q}\mathbf{K}^T}{\sqrt{d_k}}\right)\mathbf{V},
\]
where $\mathbf{Q}$, $\mathbf{K}$, and $\mathbf{V}$ are the query, key, and value matrices, respectively, and $d_k$ is the dimensionality of the keys. This formulation enables the model to consider both left and right context simultaneously. In addition, BERT uses a next sentence prediction task that further enhances its ability to capture relationships between sentence pairs, making it suitable for a wide range of natural language processing tasks. The overall design, which unifies pre-training and fine-tuning within the same architecture, has been critical in setting new benchmarks in language understanding.

\subsection{PIBERT Architecture}
The PIBERT architecture extends the conventional transformer-based BERT framework by incorporating physics-informed embeddings, constraints, and attention mechanisms. Unlike natural language processing models, where positional encoding is used to capture sequence order, PIBERT integrates domain knowledge directly into its embedding space, attention mechanism, and loss function to ensure that the learned representations adhere to fundamental physical principles. Figure \ref{fig:pibert_arch} shows the detailed architectural overview of PIBERT. The following sections outline the core components of PIBERT, detailing the mathematical formulations that underpin its architecture. 

\begin{figure}[htbp]
    \centering
    \includegraphics[width=\linewidth]{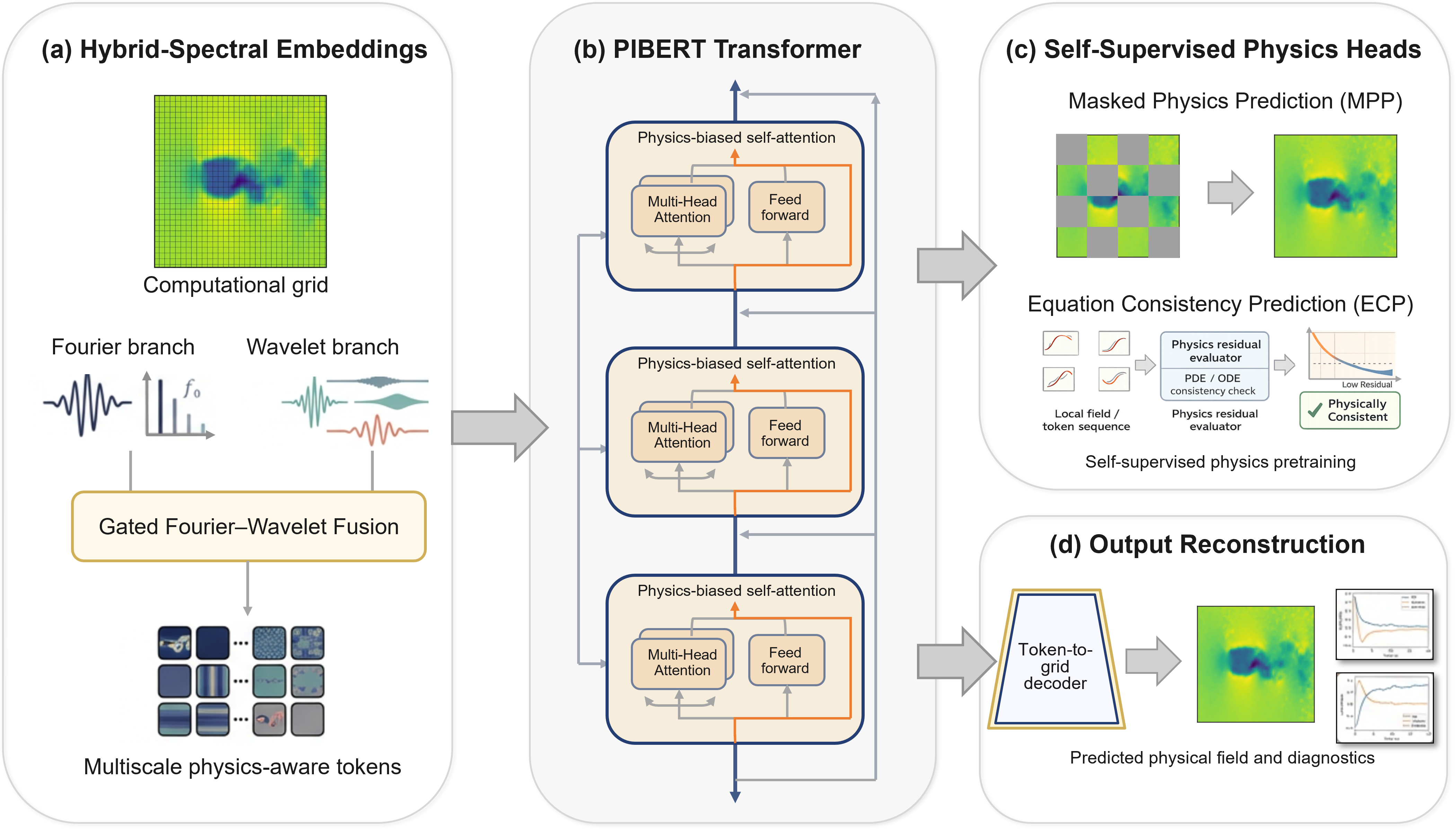}
    \caption{PIBERT architecture and training objectives.
(a) Raw CFD fields and PDE parameters defined on a spatial grid are encoded into a sequence of tokens using a gated combination of Fourier- wavelet embeddings, yielding multiscale, physics-aware token representations.
(b) The token sequence is processed by PIBERT $N$ stacked bidirectional transformer encoder.
(c) Two auxiliary heads branch from the encoder: Masked Physics Prediction (MPP), which reconstructs masked field values, and Equation Consistency Prediction (ECP), which classifies physically valid versus invalid PDE solutions with parameter adaptation.
(d) A token-to-grid decoder reconstructs the predicted physical field $\hat{y}$ and diagnostic quantities.}
    \label{fig:pibert_arch}
\end{figure}

\subsection{Physics-Informed Hybrid Spectral Embeddings}
\label{sec:hybrid-embed}

We embed grid fields with a \emph{hybrid Fourier–wavelet encoder} that provides global spectral context and local feature sensitivity. This hybridization is also motivated by recent spectral analyses showing that purely Fourier neural operators can favor dominant low-frequency content while under-representing weaker or more localized components \cite{qin2024spectral}.

\begin{table}[t]
\centering
\footnotesize
\setlength{\tabcolsep}{4pt}
\caption{Core symbols used in Sec.~\ref{sec:hybrid-embed}. Units in parentheses.}
\label{tab:symbols_core}
\begin{tabularx}{0.9\linewidth}{M >{\raggedright\arraybackslash}X}
\toprule
\textbf{Symbol} & \textbf{Meaning (units)}\\
\midrule
\Omega & Spatial domain (--)\\
(x,y),\ t & Spatial coordinates, time ($[L]$, $[T]$)\\
H,\ W & Grid height, width (px)\\
P & Patch size (px)\\
N=HW/P^{2} & Number of tokens (--)\\
u,v,\ \mathbf{u}=(u,v) & Velocity components / vector ($[L/T]$)\\
p & Pressure ($[M/(L\,T^{2})]$)\\
\rho,\ \nu & Density ($[M/L^{3}]$), kinematic viscosity ($[L^{2}/T]$)\\
\omega=\partial_x v-\partial_y u & Vorticity ($[1/T]$)\\
\nabla,\ \Delta & Gradient; Laplacian (--)\\
\alpha_F & Fourier–wavelet fusion gate (--)\\
\lambda_{\mathrm{att}} & Attention-bias strength (--)\\
Q,\ K,\ V & Attention queries, keys, values (--)\\
L_{ij},\ \alpha_{ij} & Attention logits; softmax weights (--)\\
\mathcal{L}_{\mathrm{recon}},\ \mathcal{L}_{\mathrm{phys}} & Data loss; physics penalties (--)\\
\FFT,\ \IFFT & Discrete Fourier transform and inverse (--)\\
\rfft,\ \irfft & Real 2-D FFT and inverse (--)\\
\bottomrule
\end{tabularx}
\end{table}

\subsubsection{Fourier branch (per-frequency mixing)}
Given $x\in\R^{B\times C\times H\times W}$, define $X=\mathrm{rfft2}(x)$ and apply a per-frequency channel mix on the kept half-plane; inverse rFFT returns $y_{\mathrm{ft}}\in\R^{B\times D\times H\times W}$. When the per-frequency mixing is column-unitary on the kept band, the map is nonexpansive (an isometry on band-limited inputs).

\begin{proposition}[Energy preservation of the Fourier branch]
\label{prop:fourier-energy}
If the per-frequency mixing matrices satisfy $W(h,w)^\top W(h,w)=I$ on the  modes and non-kept modes are zeroed, then the Fourier branch is $1$-Lipschitz in $\ell_2$; if inputs are band-limited to the  set, it is an isometry.
\end{proposition}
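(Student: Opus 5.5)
The plan is to reduce everything to Parseval's identity for the (suitably normalized) discrete Fourier transform, applied frequency by frequency. Write the branch as the composition $y_{\mathrm{ft}}=\irfft\circ \mathcal{M}\circ P_K\circ \rfft$. Here $P_K$ zeroes the non-kept modes, and $\mathcal{M}$ acts at each kept frequency $(h,w)$ by $\hat y(h,w)=W(h,w)\hat x(h,w)$, with $W(h,w)\in\C^{D\times C}$. We use the orthonormal (unitary) normalization of the 2-D DFT, so that $\|x\|_2^2=\sum_{h,w}\|\hat x(h,w)\|_2^2$ over the full frequency grid. Since the batch dimension is acted on independently and $\ell_2$ norms add over it, it suffices to treat a single sample.

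\textbf{Step 1 (rfft bookkeeping).} For real $x$, Hermitian symmetry gives $\hat x(-h,-w)=\overline{\hat x(h,w)}$. Hence the full-grid energy equals a weighted sum over the half-plane, with weight $1$ on self-conjugate modes (e.g. $w=0$ and, for even $W$, $w=W/2$) and weight $2$ on the others. I would fix the kept set $K$ to be closed under this pairing. I would also read the hypothesis $W^\top W=I$ as $W^{\mathsf H}W=I$ in the complex case, and require $W(-h,-w)=\overline{W(h,w)}$ so that the output is Hermitian-symmetric and $\irfft$ returns a real field with the same weighted energy. Making these conventions explicit is the main obstacle: with the library's default (unnormalized) FFT, or with non-real mixing on self-conjugate modes, the constant is off or $\irfft$ silently discards the imaginary part. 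In the latter case I would note that discarding it only decreases energy, so the $1$-Lipschitz claim survives.

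\textbf{Step 2 (per-mode isometry).} Column-orthonormality gives $\|W(h,w)z\|_2=\|z\|_2$ for every $z\in\C^C$. Therefore, for the difference of two inputs $x_1-x_2$ (the map is linear, so Lipschitz reduces to a norm bound on one input $x$),
\[
\|y_{\mathrm{ft}}\|_2^2=\sum_{(h,w)\in K} c_{hw}\,\|W(h,w)\hat x(h,w)\|_2^2=\sum_{(h,w)\in K} c_{hw}\,\|\hat x(h,w)\|_2^2\le \sum_{(h,w)} c_{hw}\,\|\hat x(h,w)\|_2^2=\|x\|_2^2 ,
\]
where $c_{hw}\in\{1,2\}$ are the weights from Step 1. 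The first and last equalities are Parseval, and the inequality drops the nonnegative energy of the zeroed modes.

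\textbf{Step 3 (isometry case).} If $x$ is band-limited to $K$, then $\hat x$ vanishes off $K$, so the inequality in Step 2 is an equality and $\|y_{\mathrm{ft}}\|_2=\|x\|_2$. Applying this to differences of band-limited inputs, which are again band-limited, shows the branch is an isometry on that subspace. Combined with Step 2, this proves the proposition.
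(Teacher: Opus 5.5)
Your proposal is correct and follows essentially the same route as the paper: Parseval, the per-mode identity $\|W(\omega)\widehat{x}(\omega)\|_2=\|\widehat{x}(\omega)\|_2$ on the kept set, discarding the nonnegative energy of the zeroed modes, and equality when $\widehat{x}$ vanishes off that set. Your added bookkeeping makes explicit what the paper leaves implicit: you read the hypothesis as $W^{\mathsf{H}}W=I$, which the per-mode step actually requires when $W$ is complex, and you track Hermitian symmetry on the rFFT half-plane. Your worry about the library's unnormalized FFT is unnecessary, though, since any paired forward/inverse convention gives the same composite map $\mathcal{F}^{-1}\circ\mathcal{M}\circ\mathcal{F}$.
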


\subsubsection{Tight-frame branch (undecimated local filters)}
We use four translation-invariant filters $\{K_{LL},K_{LH},K_{HL},K_{HH}\}$ whose discrete Fourier responses form a partition of unity on the torus, yielding exact energy partition and perfect reconstruction (Parseval frame).

\begin{proposition}[Parseval tight frame, exact energy partition]
\label{prop:tight-frame}
For all $x$, $\sum_s\|K_s * x\|_2^2=\|x\|_2^2$ and $x=\sum_s K_s^\vee * (K_s*x)$; analysis and synthesis are $1$-Lipschitz.
\end{proposition}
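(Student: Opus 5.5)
The plan is to work entirely in the discrete Fourier domain on the torus $\mathbb{Z}_H\times\mathbb{Z}_W$, where circular convolution diagonalizes. Let $\FFT$ be the unitary 2-D DFT, so $\|\FFT x\|_2=\|x\|_2$ (Plancherel). For each filter write $\widehat{K}_s(\xi)=(\FFT K_s)(\xi)$, scaled so that $\FFT(K_s*x)=\widehat{K}_s\,\FFT x$ pointwise in $\xi$. The hypothesis that ``the discrete Fourier responses form a partition of unity'' I will read in the energy sense, $\sum_s|\widehat{K}_s(\xi)|^2=1$ for every frequency $\xi$. This is the form needed for a Parseval frame, and it is what holds for the undecimated Haar/Daubechies quadrature-mirror tensor pairs used here: since $|\hat h_0|^2+|\hat h_1|^2=1$ in each direction, the tensor product gives four terms summing to $1$. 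I will state this reading explicitly as the standing assumption. Define the dual filter $K_s^\vee$ as the flipped conjugate, $K_s^\vee(n)=\overline{K_s(-n)}$, whose response is $\overline{\widehat{K}_s(\xi)}$; that is, convolution with $K_s^\vee$ is the adjoint of convolution with $K_s$.

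\textbf{Step 1 (energy partition).} By Plancherel,
$\sum_s\|K_s*x\|_2^2=\sum_s\sum_\xi|\widehat K_s(\xi)|^2|\hat x(\xi)|^2=\sum_\xi|\hat x(\xi)|^2=\|x\|_2^2$, after exchanging the finite sums.

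\textbf{Step 2 (perfect reconstruction).} In frequency, $\FFT\big(\sum_s K_s^\vee*(K_s*x)\big)(\xi)=\sum_s\overline{\widehat K_s(\xi)}\widehat K_s(\xi)\hat x(\xi)=\hat x(\xi)$. Applying $\IFFT$ gives $x$.

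\textbf{Step 3 (Lipschitz bounds).} The analysis operator $A:x\mapsto(K_s*x)_s$ is linear. Step 1 gives $\|Ax\|=\|x\|$, hence $\|Ax-Ay\|=\|x-y\|$, so $A$ is an isometry and in particular $1$-Lipschitz. The synthesis operator $S:(z_s)\mapsto\sum_sK_s^\vee*z_s$ equals $A^*$, because convolution with $K_s^\vee$ is the adjoint of convolution with $K_s$. Since $\|A^*\|=\|A\|=1$, $S$ is linear with operator norm $1$, hence $1$-Lipschitz. One can also check this directly with Cauchy--Schwarz per frequency: $|\sum_s\overline{\widehat K_s}\hat z_s|^2\le\sum_s|\widehat K_s|^2\sum_s|\hat z_s|^2=\sum_s|\hat z_s|^2$.

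The main obstacle is not computational but definitional: making precise the ``partition of unity'' hypothesis, the DFT normalization, and the meaning of $K_s^\vee$. If the partition of unity were meant literally as $\sum_s\widehat K_s=1$ instead of $\sum_s|\widehat K_s|^2=1$, the identities would fail in general. So I will fix the squared-modulus convention and the adjoint definition of $K_s^\vee$ at the outset. After that, Steps 1--3 are routine.
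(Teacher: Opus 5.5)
Your proposal is correct and follows essentially the same route as the paper: the squared-modulus partition of unity $\sum_s|\widehat K_s|^2\equiv 1$, Parseval for the energy identity, and the analysis map $A$ being an isometry with synthesis $S=A^\ast$, so that $SA=I$ and both maps are $1$-Lipschitz. You make the identification $S=A^\ast$ and the norm bound on $S$ more explicit than the paper does; the only other difference is that the paper obtains $\sum_s|\widehat K_s|^2\equiv 1$ from its separable $\cos\theta/\sin\theta$ windows in \Cref{lem:pou}, not from Haar/Daubechies quadrature-mirror pairs, so you should cite that lemma for your standing assumption.
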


\subsubsection{Hybrid fusion by a scalar softmax gate}
Let $E = \alpha_F \, y_{\mathrm{ft}} + (1-\alpha_F)\,y_{\mathrm{tf}}$ where
$\alpha_F = \exp(\gamma_F)/(\exp(\gamma_F)+\exp(\gamma_W)) \in [0,1]$.

\begin{lemma}[Nonexpansive hybrid for fixed gates]
\label{prop:hybrid-nonexp}
Assuming the Fourier and tight-frame branches are $1$-Lipschitz maps in $x$.
Then, for any \emph{fixed} gate field $\alpha_F$ with values in $[0,1]$ (treated
as constant with respect to perturbations in $x$), the hybrid map
$x \mapsto E(x) = \alpha_F \odot y_{\mathrm{ft}}(x) + (1-\alpha_F)\odot y_{\mathrm{tf}}(x)$
is $1$-Lipschitz in $x$. If in addition the input is band-limited and the frame
branch reduces to the identity, the hybrid is an isometry on that subspace.

In practice, $\alpha_F$ is produced by a small network that depends on $x$;
we therefore regard this result as a conditional nonexpansiveness of the
fusion step \emph{given} the gate, rather than a global Lipschitz bound
for the entire gated block.
\end{lemma}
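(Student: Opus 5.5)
The plan is to prove both claims directly from the convex-combination structure of the fusion, working coordinate-wise because the gate field $\alpha_F$ may vary over positions and channels. Fix two inputs $x,x'$ and write $d=x-x'$, $a=y_{\mathrm{ft}}(x)-y_{\mathrm{ft}}(x')$ and $b=y_{\mathrm{tf}}(x)-y_{\mathrm{tf}}(x')$. Since $\alpha_F$ is held fixed,
\[
E(x)-E(x')=\alpha_F\odot a+(1-\alpha_F)\odot b .
\]
By hypothesis (Proposition~\ref{prop:fourier-energy} for the Fourier branch, Proposition~\ref{prop:tight-frame} for the frame branch) we have $\|a\|_2\le\|d\|_2$ and $\|b\|_2\le\|d\|_2$.

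\textbf{Step 1 (pointwise convexity).} Fix an entry $i$ and write $\alpha_i\in[0,1]$ for the gate value there. Convexity of $t\mapsto t^2$ gives
\[
(\alpha_i a_i+(1-\alpha_i)b_i)^2\le \alpha_i a_i^2+(1-\alpha_i)b_i^2 .
\]
Summing over $i$ yields
\[
\|E(x)-E(x')\|_2^2\le\langle\alpha_F,a^{\odot2}\rangle+\langle 1-\alpha_F,b^{\odot2}\rangle .
\]

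\textbf{Step 2 (close the bound).} The task is to show that the right-hand side of Step 1 is at most $\|d\|_2^2$.

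\emph{Constant gate.} If the gate is spatially constant, as in the scalar softmax definition of $\alpha_F$ given above, the right-hand side equals $\alpha_F\|a\|^2+(1-\alpha_F)\|b\|^2\le\|d\|^2$, and we are done.

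\emph{Varying gate.} This is where I expect the main obstacle. The hypothesis only controls $\|a\|$ and $\|b\|$ globally, not their energy distribution across entries. A gate equal to $1$ where $a$ is large and $0$ where $b$ is large can make the weighted sum as large as $\|a\|^2+\|b\|^2$.

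My first attempt will be to strengthen the input from the two earlier propositions. I will use the explicit structure of the branches (unitary per-frequency mixing, and the Parseval partition $\sum_s|\widehat K_s|^2=1$) to try to show the jointly weighted bound $\langle\alpha_F,a^{\odot2}\rangle+\langle1-\alpha_F,b^{\odot2}\rangle\le\|d\|^2$. For example, I would try to exhibit both $a$ and $b$ as pieces of a single nonexpansive map into a product space.

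If that fails, the honest fallback is the bound $\|E(x)-E(x')\|\le\sqrt2\,\|d\|$. In that case I would state explicitly that constant gates are needed for the exact constant $1$.

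\textbf{Step 3 (isometry).} Assume $x,x'$ are band-limited and $y_{\mathrm{tf}}$ acts as the identity there. Then $b=d$, and Proposition~\ref{prop:fourier-energy} gives $\|a\|=\|d\|$.

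Equality in Step 1 requires $a_i=b_i$ at every entry where $0<\alpha_i<1$. So I will track equality cases in both the convexity step and Step 2, and verify that on this subspace the increment $a$ coincides with $d$ wherever the gate is fractional. This uses the reading that on band-limited inputs the kept-band unitary mixing acts as the identity on the channels being fused.

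Under that condition $\|E(x)-E(x')\|=\|d\|$, giving the isometry. The final remark in the lemma, that $\alpha_F$ in practice depends on $x$, needs no proof: it only records that the bound is conditional on the gate being fixed.
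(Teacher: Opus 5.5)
Your constant-gate argument is correct and is, in substance, the paper's proof. The paper uses the triangle inequality $\|\alpha a+(1-\alpha)b\|\le\alpha\|a\|+(1-\alpha)\|b\|$, while you use convexity of $t\mapsto t^2$; your version also exposes the equality cases you need later. The gap is that you leave the other two parts open, and neither can be closed from the stated hypotheses.

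\emph{Varying gate.} Abandon the ``first attempt'': no structural property of the branches gives the jointly weighted bound. Take $y_{\mathrm{tf}}=\mathrm{id}$, which is the Parseval synthesis composed with its analysis. Let $y_{\mathrm{ft}}$ be the circular shift by half a period, whose per-frequency multiplier $(-1)^k$ is real orthogonal, with all modes kept. Then both branches are isometries and every input is band-limited. For $x=\delta_0$ and a gate equal to $1$ at the shifted site $s$ and $0$ at site $0$, you get $E(x)=\delta_0+\delta_s$, so $\|E(x)\|_2=\sqrt2\,\|x\|_2$. Your $\sqrt2$ fallback is therefore sharp, and the constant $1$ holds only for a scalar gate. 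A scalar gate is in fact what the paper's $\alpha_F=\mathrm{softmax}(\gamma_F,\gamma_W)$ is. The paper's proof also claims the spatially varying case, but its first inequality pulls $\alpha$ out of the norm, which is legitimate only for a scalar. You should state the result for scalar gates and record the $\sqrt2$ bound, with this example, for gate fields.

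\emph{Isometry.} In Step~3, the ``reading'' that the kept-band mixing acts as the identity cannot be verified; it is an additional hypothesis. Without it the clause is false even for a scalar gate: with $W(\omega)\equiv -I$ on all modes, $y_{\mathrm{tf}}=\mathrm{id}$ and $\alpha_F=\tfrac12$, you get $E\equiv 0$. Your equality analysis is the right diagnosis. For scalar $\alpha_F\in(0,1)$ with $b=d$ and $\|a\|_2=\|d\|_2$, the equality $\|\alpha_F a+(1-\alpha_F)d\|_2=\|d\|_2$ forces $\langle a,d\rangle=\|d\|_2^2$, hence $a=d$. So the hybrid is an isometry on the band-limited subspace if and only if $y_{\mathrm{ft}}$ is itself the identity there (or $\alpha_F\in\{0,1\}$). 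In that case the conclusion is trivial, because $E$ is then the identity. The paper's own proof of this clause has the same defect: it infers that a convex combination of two isometries is an isometry. Make the extra hypothesis explicit in the statement rather than presenting it as something to check.
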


\noindent
\textit{Proof sketches and constructions for \cref{prop:fourier-energy,prop:tight-frame,prop:hybrid-nonexp} are given in \ref{app:extended-proofs}.}

\subsubsection{Physics-Constrained Self-Attention}
\label{sec:phys-attn}

Standard attention uses logits
$L_{ij} = \langle Q_i, K_j \rangle / \sqrt{d_k}$ and
$\alpha_{ij} = \mathrm{softmax}_j(L_{ij})$.
We introduce a \emph{physics bias} by subtracting a nonnegative proxy
$R_{ij} \ge 0$ derived from PDE diagnostics,
\begin{equation}
\tilde L_{ij}
\;=\; L_{ij} - \lambda_{\mathrm{att}}\,R_{ij},
\qquad
\alpha_{ij}(\lambda_{\mathrm{att}})
\;=\;
\frac{\exp(\tilde L_{ij})}{\sum_m \exp(\tilde L_{im})},
\label{eq:biased-attn}
\end{equation}
By subtracting localized PDE residuals from the attention logits, we penalize attention directed toward tokens associated with larger physical violations. This encourages the transformer to rely more strongly on physically consistent latent features during bidirectional information mixing.

In general, $R_{ij}$ can be pairwise, for example by depending on relative position, geometry, or the physical states associated with both tokens. In this work, we use the following separable, key-dependent instance: \begin{equation} R_{ij} \;=\; r_j, \qquad r_j \ge 0, \label{eq:keywise-residual-bias} \end{equation} where $r_j$ is the pooled physical residual associated with the attended key token $j$. The resulting bias is $P_{ij}=-\lambda_{\mathrm{att}}r_j$ and is broadcast across all query positions and attention heads. Thus, an explicit dense pairwise residual matrix does not need to be stored. The residual $r_j$ is obtained by pooling grid-level diagnostics, such as divergence and momentum-style proxies, over the receptive-field support of token $j$, yielding the \texttt{R\_tok} vector used in Algorithm~\ref{alg:pibert-compact}. A query-only shift $R_{ij}=r_i$ is not used because it is constant across all keys within row $i$ and therefore cancels exactly under rowwise softmax: \begin{equation} \operatorname{softmax}_{j}\!\left(L_{ij} -\lambda_{\mathrm{att}}r_i\right) = \operatorname{softmax}_{j}\!\left(L_{ij}\right). \end{equation} Consequently, the operative residual bias must vary with the key index $j$, or more generally with both $i$ and $j$, in order to modify the relative attention weights.

\begin{lemma}[Softmax ratio monotonicity]
\label{lem:ratio}
For fixed row $i$, $\alpha_{ij_1}/\alpha_{ij_2}=\exp\!\big((L_{ij_1}-L_{ij_2})-\lambda_{\mathrm{att}}(R_{ij_1}-R_{ij_2})\big)$; hence if $R_{ij_1}>R_{ij_2}$ the ratio decreases monotonically with $\lambda_{\mathrm{att}}$.
\end{lemma}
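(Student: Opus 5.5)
The argument is a direct computation from the definition in \eqref{eq:biased-attn}, so the plan is to write out the two softmax weights explicitly and observe that the row normalizer cancels. Fix a row $i$ and write $Z_i(\lambda_{\mathrm{att}})=\sum_m \exp(\tilde L_{im})$. This quantity is strictly positive and finite because it is a finite sum of exponentials. Then $\alpha_{ij_1}=\exp(\tilde L_{ij_1})/Z_i$ and $\alpha_{ij_2}=\exp(\tilde L_{ij_2})/Z_i$, and both are strictly positive, so the ratio is well defined. Dividing, $Z_i$ cancels, and we get
\[
\frac{\alpha_{ij_1}}{\alpha_{ij_2}}=\exp\bigl(\tilde L_{ij_1}-\tilde L_{ij_2}\bigr)
=\exp\bigl((L_{ij_1}-L_{ij_2})-\lambda_{\mathrm{att}}(R_{ij_1}-R_{ij_2})\bigr),
\]
after substituting $\tilde L_{ij}=L_{ij}-\lambda_{\mathrm{att}}R_{ij}$. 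This is the claimed identity.

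For the monotonicity claim, treat the logits $L_{ij}$ and the proxies $R_{ij}$ as fixed with respect to $\lambda_{\mathrm{att}}$. This is the one point worth stating explicitly, since in the full network $Q$, $K$ and $r_j$ do not depend on the scalar bias strength. The ratio is then $g(\lambda)=C\exp(-\lambda\Delta)$ with $C=\exp(L_{ij_1}-L_{ij_2})>0$ and $\Delta=R_{ij_1}-R_{ij_2}>0$. Its derivative is $g'(\lambda)=-\Delta\, g(\lambda)<0$, so $g$ is strictly decreasing on $\R$, and in particular on $\lambda_{\mathrm{att}}\ge 0$. Alternatively, $\log g$ is affine in $\lambda$ with negative slope $-\Delta$, and the result follows because $\exp$ is increasing.

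I would also note two consequences that follow immediately. First, $g(\lambda)\to 0$ as $\lambda_{\mathrm{att}}\to\infty$, so attention mass shifts away from the higher-residual key. Second, if $R_{ij_1}=R_{ij_2}$ the ratio is independent of $\lambda_{\mathrm{att}}$. This is consistent with the earlier remark that a query-only shift $R_{ij}=r_i$ cancels under the rowwise softmax. There is no genuine obstacle here; the only care needed is to make explicit the assumption that $L$ and $R$ do not vary with $\lambda_{\mathrm{att}}$.
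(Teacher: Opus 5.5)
Your proposal is correct and matches the paper's proof: cancel the row normalizer, substitute $\tilde L_{ij}=L_{ij}-\lambda_{\mathrm{att}}R_{ij}$, and read off the strict decrease in $\lambda_{\mathrm{att}}$ when $R_{ij_1}>R_{ij_2}$. One correction to your side remark: holding $L$ fixed is a hypothesis of this single-layer statement, not a property of the full network. In the stacked encoder, the same $\lambda_{\mathrm{att}}$ biases every layer, so the queries and keys at layers $\ell\ge 2$ depend on $\lambda_{\mathrm{att}}$ through the earlier biased attention outputs; only $r_j$, which is computed from the input $x$, is genuinely independent of it.
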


\begin{lemma}[Rowwise Lipschitz control]
\label{lem:lipschitz}
Let $\alpha_i(\lambda)$ denote row-$i$ weights under \eqref{eq:biased-attn}. Then
$\|\alpha_i(\lambda)-\alpha_i(0)\|_1 \le \frac{\lambda_{\mathrm{att}}}{2}\,\|R_{i\cdot}\|_\infty$.
\end{lemma}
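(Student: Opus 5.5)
The plan is to differentiate the row weights $\alpha_i(\lambda)$ along the path $\lambda\mapsto \alpha_i(\lambda)$ in \eqref{eq:biased-attn}, bound the $\ell_1$ norm of that derivative uniformly in $\lambda$, and then integrate from $0$ to $\lambda_{\mathrm{att}}$. Fix the row $i$ and write $R_j:=R_{ij}\ge 0$ and $\alpha_j(\lambda):=\alpha_{ij}(\lambda)$.

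\textbf{Step 1 (derivative).} Differentiating the softmax with respect to $\lambda$ gives
\[
\frac{d\alpha_j}{d\lambda}=-\alpha_j\bigl(R_j-\bar R(\lambda)\bigr),\qquad \bar R(\lambda):=\sum_m \alpha_m(\lambda)R_m .
\]
This is the usual softmax Jacobian applied to the logit perturbation $-R$. It is also consistent with \cref{lem:ratio}, since the log-ratios move linearly in $\lambda$. Consequently
\[
\Bigl\|\tfrac{d\alpha_i}{d\lambda}\Bigr\|_1=\sum_j \alpha_j\,|R_j-\bar R|,
\]
which is the mean absolute deviation of the random variable $R_J$ with $J\sim\alpha_i(\lambda)$.

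\textbf{Step 2 (uniform bound).} Next I bound this mean absolute deviation by half the range of $R_{i\cdot}$. By Cauchy--Schwarz, $\mathbb{E}|R_J-\bar R|\le \sqrt{\mathrm{Var}(R_J)}$. Popoviciu's inequality then gives $\mathrm{Var}(R_J)\le (\max_j R_j-\min_j R_j)^2/4$ for any distribution supported in $[\min_j R_j,\max_j R_j]$. Hence
\[
\Bigl\|\tfrac{d\alpha_i}{d\lambda}\Bigr\|_1\le \tfrac12\bigl(\max_j R_j-\min_j R_j\bigr)\le \tfrac12\|R_{i\cdot}\|_\infty .
\]
The last inequality is exactly where the hypothesis $R_{ij}\ge 0$ is used: all values lie in $[0,\|R_{i\cdot}\|_\infty]$, so the range is at most $\|R_{i\cdot}\|_\infty$. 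Without nonnegativity the range could be as large as $2\|R_{i\cdot}\|_\infty$, and the constant would degrade to $1$.

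\textbf{Step 3 (integrate).} By the fundamental theorem of calculus and the triangle inequality for the Bochner integral,
\[
\|\alpha_i(\lambda_{\mathrm{att}})-\alpha_i(0)\|_1\le\int_0^{\lambda_{\mathrm{att}}}\Bigl\|\tfrac{d\alpha_i}{d\lambda}\Bigr\|_1\,d\lambda\le \tfrac{\lambda_{\mathrm{att}}}{2}\|R_{i\cdot}\|_\infty .
\]
This is the claimed bound. Smoothness of $\lambda\mapsto\alpha_i(\lambda)$ is immediate, since the map is a composition of exponentials and a strictly positive denominator.

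The only delicate point is Step 2, namely obtaining the sharp constant $\tfrac12$ rather than the cruder bound $2\|R\|_\infty$ that comes from applying the triangle inequality termwise. I would first try the mean-absolute-deviation and Popoviciu route above. If a self-contained argument is preferred, an alternative is a direct argument: $\sum_j\alpha_j|R_j-\bar R|=2\sum_{j:R_j>\bar R}\alpha_j(R_j-\bar R)$, and then optimize over two-point distributions, where the value $2p(1-p)(b-a)\le (b-a)/2$ is attained at $p=\tfrac12$.
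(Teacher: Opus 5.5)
Your proof is correct. Its skeleton matches the paper's: express the $\lambda$-derivative through the softmax Jacobian $J(u)=\mathrm{Diag}(\sigma)-\sigma\sigma^\top$ applied to $-R_{i\cdot}$, bound it uniformly in $\lambda$, and integrate from $0$ to $\lambda_{\mathrm{att}}$.

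You differ in the uniform bound, and there your route is the sound one. The paper bounds $\|J\|_{\infty\to 1}$ ``by column sums'' as $\max_j 2\sigma_j(1-\sigma_j)\le\frac12$. Column sums compute $\|J\|_{1\to 1}$, however, and the $\infty\to 1$ norm can equal $1$. For example, with $\sigma=(\frac12,\frac12)$ and $v=(1,-1)$ one gets $Jv=(\frac12,-\frac12)$.

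The paper also never invokes $R_{ij}\ge 0$, and without that hypothesis the stated constant is false. With $z=0$ and $r=(1,-1)$ one gets $\|\alpha(\lambda)-\alpha(0)\|_1=\tanh\lambda$, which exceeds $\lambda/2$ for small $\lambda>0$.

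Your argument supplies exactly what is needed for the $\frac12$. The mean-absolute-deviation identity, combined with Popoviciu's inequality, is controlled by the range of $R_{i\cdot}$, and nonnegativity caps that range at $\|R_{i\cdot}\|_\infty$. Your remark that the constant degrades to $1$ for signed residuals is also correct.

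The paper's line can be repaired in its own operator-norm language. Since $J\mathbf{1}=0$, replace $r$ by $r-\frac{\|r\|_\infty}{2}\mathbf{1}$, whose sup-norm is at most $\|r\|_\infty/2$ when $r\ge 0$. Then use the correct bound $\|J\|_{\infty\to 1}\le 1$. This is your argument in matrix form.

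Your fallback two-point optimization would need an extreme-point or convexity justification. The Popoviciu route already makes it unnecessary.
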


\begin{proposition}[Translation equivariance on the torus]
\label{prop:equivariance}
Index tokens by lattice sites $r(i)\in\mathbb{Z}_H\times\mathbb{Z}_W$ (periodic). If $R_{ij}=\rho\!\big(p,\;r(i)-r(j)\big)$ depends only on relative position and parameters, then attention with bias \eqref{eq:biased-attn} is translation-equivariant.
\end{proposition}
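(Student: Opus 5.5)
We make the statement precise by fixing the token lattice and the translation action. For $s\in\mathbb{Z}_H\times\mathbb{Z}_W$ let $\tau_s$ permute tokens by $r(\tau_s i)=r(i)+s$, with permutation matrix $\Pi_s$ acting on token sequences by $(\Pi_s X)_{i}=X_{\tau_s^{-1} i}$. We interpret translation equivariance as
\[
\mathrm{Attn}(\Pi_s X)=\Pi_s\,\mathrm{Attn}(X)\quad\text{for all } s .
\]
Here the queries, keys and values are computed tokenwise as $Q=XW_Q$, $K=XW_K$, $V=XW_V$. The parameters $p$ entering $\rho$ are held fixed; they are global and not tied to locations.

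\textbf{Step 1: tokenwise maps commute with permutations.} Since $W_Q,W_K,W_V$ act row by row, we have $Q(\Pi_sX)=\Pi_s Q(X)$, and likewise for $K$ and $V$. Consequently the unbiased logits satisfy
\[
L'_{ij}=L_{\tau_s^{-1}i,\,\tau_s^{-1}j}.
\]

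\textbf{Step 2: the bias matrix is invariant under simultaneous relabeling.} This is the only place the hypothesis is used. Because $r(i)-r(j)$ is unchanged when both sites are shifted by $s$,
\[
R_{\tau_s^{-1}i,\,\tau_s^{-1}j}=\rho\big(p,\,r(i)-s-r(j)+s\big)=R_{ij}.
\]
Hence the biased logits also transform covariantly: $\tilde L'_{ij}=\tilde L_{\tau_s^{-1}i,\,\tau_s^{-1}j}$.

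\textbf{Step 3: softmax and value aggregation are covariant.} The row-$i$ softmax normalizes over all $m$. Since $\tau_s$ is a bijection of the lattice, we can reindex $m\mapsto\tau_s^{-1}m$, which gives $\alpha'_{ij}=\alpha_{\tau_s^{-1}i,\tau_s^{-1}j}$. It follows that
\[
\textstyle\sum_j\alpha'_{ij}V'_j=\sum_j \alpha_{\tau_s^{-1}i,\tau_s^{-1}j}V_{\tau_s^{-1}j}=\big(\Pi_s\,\mathrm{Attn}(X)\big)_i .
\]
Multiple heads and the output projection are tokenwise as well, so they preserve this property.

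\textbf{Main obstacle.} The point needing care is not the algebra but the conditions under which the result holds. Periodicity is essential: it makes $\tau_s$ a bijection with differences taken in $\mathbb{Z}_H\times\mathbb{Z}_W$, so the reindexing in the softmax denominator is legitimate. In addition, no absolute positional encoding may be added to $X$, or it must be included in the covariant input. We will state these as standing assumptions.

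We will also remark on the key-dependent instance $R_{ij}=r_j$ actually used in the paper. It does not fall under the hypothesis unless $r$ is computed equivariantly from the fields, i.e.\ $r(\Pi_sX)=\Pi_s r(X)$. In that case the same argument applies, with Step 2 replaced by $R'_{ij}=r_{\tau_s^{-1}j}=R_{\tau_s^{-1}i,\tau_s^{-1}j}$.
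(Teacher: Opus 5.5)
Your proof is correct and follows the same route as the paper. The logits transform by simultaneous permutation $\Pi_s L\Pi_s^\top$, the relative-position bias is unchanged under that relabeling, and rowwise softmax commutes with the permutation. Your version is more explicit than the paper's sketch: it states outright that $\Pi_s R\Pi_s^\top=R$, which is what the paper's ``likewise for $R$'' must mean, and it carries the argument through the value aggregation $\alpha V$, which the paper leaves implicit.
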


\begin{theorem}[Continuum (kernel) limit]
\label{prop:continuum}
Placing tokens on a regular grid with spacing $h\to 0$ and using Riemann sums, the biased attention converges uniformly to a nonlocal kernel operator
$(Tv)(x)=\int_\Omega w_\lambda(x,y)v(y)\,dy$ with
$w_\lambda(x,y)\propto \exp(\langle q(x),k(y)\rangle-\lambda r(x,y))$.
\end{theorem}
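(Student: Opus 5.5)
We prove the statement under explicit regularity hypotheses, which the informal theorem leaves implicit. We take $\Omega$ to be a compact rectangle (or the torus, as in \cref{prop:equivariance}) with $|\Omega|>0$. The fields $q,k:\Omega\to\R^{d_k}$ and $v:\Omega\to\R^{d_v}$ are assumed continuous, with tokens sampled as $Q_i=q(x_i)$, $K_j=k(x_j)$, $V_j=v(x_j)$ at the grid nodes $x_i$ of spacing $h$. The bias is $R_{ij}=r(x_i,x_j)$ with $r$ continuous and $r\ge 0$ on $\Omega\times\Omega$. The scaling $1/\sqrt{d_k}$ is absorbed into $q$. Set
\[
g(x,y)=\exp\bigl(\langle q(x),k(y)\rangle-\lambda r(x,y)\bigr).
\]
By compactness, $g$ is continuous, uniformly continuous, and bounded above and below: $0<m\le g\le M$.

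The key observation is that the biased softmax \eqref{eq:biased-attn} is a ratio of two Riemann sums. Multiplying numerator and denominator by the cell volume $h^2$ gives
\[
(T_h v)(x_i)=\sum_j \alpha_{ij}V_j=\frac{h^2\sum_j g(x_i,x_j)v(x_j)}{h^2\sum_m g(x_i,x_m)}=\frac{N_h(x_i)}{D_h(x_i)}.
\]
The candidate limit is $(Tv)(x)=N(x)/D(x)$ with
\[
N(x)=\int_\Omega g(x,y)v(y)\,dy,\qquad D(x)=\int_\Omega g(x,y)\,dy .
\]
This is exactly the kernel operator in the statement with $w_\lambda(x,y)=g(x,y)/D(x)$, which makes the proportionality constant explicit as a row normalisation.

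The proof has three steps. First, we show that $N_h\to N$ and $D_h\to D$ uniformly in $x$. Let $\omega$ be a joint modulus of continuity of $y\mapsto g(x,y)v(y)$, taken uniformly over $x$, which exists by uniform continuity on the compact set $\Omega\times\Omega$. The Riemann-sum error is then at most $|\Omega|\,\omega(\sqrt2 h)$, plus a boundary-cell term of order $O(h)$ when $\Omega$ is a rectangle. The same bound holds for $D_h$. Second, we use $D\ge m|\Omega|>0$ and $D_h\ge m|\Omega|/2$ for small $h$ to control the quotient:
\[
\Bigl|\tfrac{N_h}{D_h}-\tfrac{N}{D}\Bigr|\le \frac{|N_h-N|}{D_h}+\frac{|N|\,|D_h-D|}{D_hD}.
\]
Both terms are uniformly small, which gives uniform convergence on the grid nodes. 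Third, we extend the discrete operator off-grid, either by piecewise-constant extension (evaluate at $x$ with the nearest-node rule) or by evaluating the formula at arbitrary $x$. Uniform continuity of $N/D$ then closes the gap.

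The main obstacle is the uniformity in $x$ in the first step. The Riemann sums use nodes that depend on $h$, while the integrand also depends on the evaluation point $x$. Handling this needs equicontinuity of the family $\{g(x,\cdot)v(\cdot)\}_x$, and we obtain it from joint uniform continuity on the compact product domain. A secondary point concerns pooled rather than pointwise samples, for example the patch-pooled $r_j$ of \eqref{eq:keywise-residual-bias} or patch-averaged $q,k$. Pooling over a support of diameter $O(h)$ adds only another $\omega(O(h))$ perturbation to $g$. \Cref{lem:lipschitz} already shows that the weights depend Lipschitz-continuously on bias perturbations, so this perturbation vanishes in the limit.
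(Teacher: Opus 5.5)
Your proof is correct and takes essentially the same route as the paper: both treat the biased softmax as a Riemann-sum discretization of the row-normalized kernel $w_\lambda(x,y)=g(x,y)/\int_\Omega g(x,y')\,dy'$ and pass to the limit using boundedness and continuity of $q,k,r$ on a compact (periodic) domain. Your version is more careful. Writing the output as the ratio $N_h/D_h$ keeps the $h^2$ scaling consistent, whereas the paper's sketch writes $\sum_j w_h v\,h^d$ with weights that are already normalized. Your joint uniform continuity on $\Omega\times\Omega$ is also what actually gives uniformity in $x$, where the paper only cites dominated convergence.
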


\noindent
\textit{Proofs for \cref{lem:ratio,lem:lipschitz,prop:equivariance,prop:continuum} are given in the \ref{app:extended-proofs}}
\paragraph{Instantiation for incompressible Navier–Stokes}
With velocity $u=(u,v)$, pressure $p$, density $\rho$, viscosity $\nu$, define diagnostics
\[
R^{(\mathrm{div})}=\big|\nabla\cdot u\big|,\quad
R^{(\mathrm{mom})}=\left\|
\partial_t u + (u\cdot\nabla)u + \frac{1}{\rho}\nabla p - \nu\,\Delta u
\right\|_2,\quad
R=\alpha_{\mathrm{div}}R^{(\mathrm{div})}+\alpha_{\mathrm{mom}}R^{(\mathrm{mom})}.
\]
We evaluate $R$ on the grid via central differences and pool it to the token level. Only the token-level residual vector is stored and broadcast across query positions, preserving the $O(BN^2d)$ cost of attention. All theoretical statements above hold for general $R_{ij}$; our actual implementation corresponds to the separable special case $R_{ij}=r_j$, i.e.\ a key-dependent scalar bias derived from pooled PDE diagnostics. In this paper, that bias is instantiated for incompressible-flow diagnostics. For other PDE families, the same architecture can still be used, but the design of $R_{ij}$ must be rebuilt around the appropriate residuals, invariants, and boundary constraints of the target system.

\subsubsection{Self-Supervised Objectives (MPP/ECP) and Physics Coupling}
\label{sec:ss-objectives-short}
With mask $M\in\{0,1\}^{H\times W}$ and input $\tilde{x}=M\odot x$, the MPP loss is
\begin{equation}
\mathcal{L}_{\mathrm{mpp}}=\frac{1}{|\{(i,j):M_{ij}=0\}|}\sum_{M_{ij}=0}\|f_\theta(\tilde{x},p)_{ij}-x_{ij}\|_2^2.
\label{eq:mpp}
\end{equation}
\begin{proposition}[Population minimizer]
\label{prop:mpp-condexp}
Under MSE risk, the population minimizer is the conditional mean $f^\star(\tilde{x},p)=\mathbb{E}[x\mid \tilde{x},p]$.
\end{proposition}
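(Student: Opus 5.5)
The plan is the standard $L^2$ projection argument, applied to the population version of \eqref{eq:mpp}. Write the population risk as $\mathcal{R}(f)=\E\big[\sum_{(i,j):M_{ij}=0} w_M\,\|f(\tilde{x},p)_{ij}-x_{ij}\|_2^2\big]$, with $w_M=1/|\{M_{ij}=0\}|$. Expectation is over the joint law of $(x,p,M)$, and $f$ ranges over measurable maps with square-integrable outputs. We assume $\E\|x\|_2^2<\infty$, and we treat the mask as observed, since it can be read off from $\tilde{x}$ or passed as an input. The first step is to condition on the observed information $Z=(\tilde{x},p,M)$ and use the tower property: $\mathcal{R}(f)=\E\big[\E[\,\cdot\mid Z]\big]$. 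Because $w_M$ and the masked index set are $Z$-measurable, the inner conditional risk separates into a sum over masked sites of $\E[\|f(Z)_{ij}-x_{ij}\|_2^2\mid Z]$.

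The second step handles each site separately. Set $g_{ij}=\E[x_{ij}\mid Z]$ and expand
\[
\E\big[\|f_{ij}-x_{ij}\|_2^2\mid Z\big]=\|f_{ij}-g_{ij}\|_2^2+\E\big[\|x_{ij}-g_{ij}\|_2^2\mid Z\big].
\]
The cross term $2\langle f_{ij}-g_{ij},\,\E[g_{ij}-x_{ij}\mid Z]\rangle$ vanishes because $f_{ij}-g_{ij}$ is $Z$-measurable. The second summand does not depend on $f$. So the risk is minimized exactly when $f_{ij}=g_{ij}$ almost surely on every masked site, and that minimizer is unique up to null sets. In the unmasked positions $\tilde{x}_{ij}=x_{ij}$, so $\E[x_{ij}\mid Z]=x_{ij}$. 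Setting $f^\star=\E[x\mid Z]$ on the whole grid is therefore consistent, and it attains the minimum for every mask realization at once.

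The main obstacle is the conditioning set. The proposition writes $\E[x\mid\tilde{x},p]$, but the proof naturally produces $\E[x\mid\tilde{x},p,M]$. I would first try to show these agree when $M$ is $\sigma(\tilde{x})$-measurable. That holds, for example, if masked entries carry a sentinel value or the mask is concatenated to the input, and also almost surely if $x$ has a continuous law so that $x_{ij}=0$ has probability zero. Otherwise I would state the result with $M$ included and note that it reduces to the stated form in this case. A second point to address is that the random normalization $w_M$ does not change the minimizer: it is a positive $Z$-measurable weight, so it factors out of the conditional risk. I would also note that the statement concerns the unrestricted function class, so it describes the Bayes-optimal predictor rather than the trained $f_\theta$.
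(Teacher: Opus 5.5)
The paper gives no proof of this proposition. It is stated in Section~3, and the appendix proves the other results but not this one, so there is no route to compare against.

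Your argument is the standard one and it is correct: condition on the observed information, kill the cross term by measurability, and read off the Pythagorean decomposition.

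Your caveat about the conditioning set repairs the statement; it is not a technicality. Suppose $f$ may depend only on $(\tilde{x},p)$ and $M$ is not $\sigma(\tilde{x},p)$-measurable. Conditioning on $(\tilde{x},p)$ instead of $Z$ then gives the site-wise minimizer $\mathbb{E}[w_M(1-M_{ij})\,x_{ij}\mid\tilde{x},p]\,/\,\mathbb{E}[w_M(1-M_{ij})\mid\tilde{x},p]$, wherever the denominator is positive. This is a mask-reweighted conditional mean, and it generally differs from $\mathbb{E}[x_{ij}\mid\tilde{x},p]$.

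This is exactly what happens under the 80/10/10 corruption the paper describes for MPP. Masked entries left unchanged are indistinguishable from unmasked ones, and noise-replaced entries need not be recognizable, so the mask cannot be read off the input.

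The right fix is one of two options:
\begin{itemize}
\item assume mask-recoverability, via your sentinel, concatenated-mask, or continuous-law conditions; or
\item state the result as $\mathbb{E}[x\mid\tilde{x},p,M]$.
\end{itemize}

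Your treatment of the random normalization $w_M$ as a positive $Z$-measurable weight is right. Add the assumption that at least one site is masked almost surely, so that $w_M$ is defined.
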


\paragraph{Divergence-aware regularization for incompressible flows}
Define the penalized population risk $\mathcal{R}_\lambda(g)=\mathbb{E}\|x-g(\tilde{x})\|_2^2+\lambda\,\mathbb{E}\|D\,g(\tilde{x})\|_2^2$ where $D$ is the discrete divergence.

\begin{theorem}[Oracle inequality toward the solenoidal class]
\label{prop:oracle-div}
If $\mathcal{H}=\ker(D)$ is the discrete divergence-free subspace and $\mathrm{dist}(u,\mathcal{H})\le c_H\|Du\|_2$, then any minimizer $g_\lambda$ of $\mathcal{R}_\lambda$ satisfies
\[
\mathbb{E}\|x-g_\lambda(\tilde{x})\|_2^2
\le
\mathbb{E}\|x-g^\star(\tilde{x})\|_2^2
+\frac{c_H^2}{\lambda}\,\mathbb{E}\|D\,g_\lambda(\tilde{x})\|_2^2
\]
for all $g^\star$ with $Dg^\star\equiv 0$.
\end{theorem}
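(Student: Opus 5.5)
The plan is to compare the penalized risk of the minimizer $g_\lambda$ against that of an arbitrary divergence-free competitor $g^\star$, using nothing beyond the definition of $\mathcal{R}_\lambda$. The stated inequality turns out to be a weakening of a sharper one that comes directly from minimality.

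First, fix any admissible $g^\star$ with $Dg^\star\equiv 0$. I will assume explicitly that $g^\star$ lies in the class over which $\mathcal{R}_\lambda$ is minimized, i.e.\ measurable maps $\tilde{x}\mapsto g(\tilde{x})$ with finite second moments. Because the penalty vanishes for such a map, we have $\mathcal{R}_\lambda(g^\star)=\mathbb{E}\|x-g^\star(\tilde{x})\|_2^2$. Since $g_\lambda$ minimizes $\mathcal{R}_\lambda$, it follows that $\mathcal{R}_\lambda(g_\lambda)\le \mathcal{R}_\lambda(g^\star)$. Expanding the left-hand side gives
\[
\mathbb{E}\|x-g_\lambda(\tilde{x})\|_2^2+\lambda\,\mathbb{E}\|D\,g_\lambda(\tilde{x})\|_2^2\le \mathbb{E}\|x-g^\star(\tilde{x})\|_2^2 .
\]

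Second, I drop the nonnegative penalty on the left. This yields $\mathbb{E}\|x-g_\lambda\|_2^2\le \mathbb{E}\|x-g^\star\|_2^2-\lambda\,\mathbb{E}\|Dg_\lambda\|_2^2$. The subtracted term is nonpositive, and the added term $\frac{c_H^2}{\lambda}\mathbb{E}\|Dg_\lambda\|_2^2$ is nonnegative, so the claimed bound follows immediately. In the proof I would record the stronger form with $-\lambda\,\mathbb{E}\|Dg_\lambda\|_2^2$. As a corollary it also gives $\mathbb{E}\|Dg_\lambda\|_2^2\le \lambda^{-1}\big(\mathbb{E}\|x-g^\star\|_2^2-\mathbb{E}\|x-g_\lambda\|_2^2\big)$.

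Third, I would show where the hypothesis $\mathrm{dist}(u,\mathcal{H})\le c_H\|Du\|_2$ is actually useful, since the stated inequality does not need it. Let $\Pi_{\mathcal{H}}$ be the orthogonal projector onto $\mathcal{H}=\ker D$, and apply it pointwise to get $\Pi_{\mathcal{H}}g_\lambda$, which is itself a divergence-free competitor. The triangle inequality and $(a+b)^2\le 2a^2+2b^2$ then give
\[
\mathbb{E}\|x-\Pi_{\mathcal{H}}g_\lambda\|_2^2\le 2\,\mathbb{E}\|x-g_\lambda\|_2^2+2c_H^2\,\mathbb{E}\|Dg_\lambda\|_2^2 .
\]
Combined with the bound from the second step, this says the projected predictor is near-oracle within the solenoidal class, with an excess governed by $c_H^2/\lambda$. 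This is the natural reading of the $c_H^2/\lambda$ term.

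The main obstacle is not analytic but one of well-posedness. The argument needs three things: that a minimizer $g_\lambda$ exists, that the comparison class contains the divergence-free $g^\star$, and that all expectations are finite. I would state these as standing assumptions, for instance by taking $\mathbb{E}\|x\|_2^2<\infty$ and minimizing over $L^2$ maps. I would also note that in finite dimensions the Poincaré-type constant $c_H$ always exists: $D$ restricted to $\mathcal{H}^\perp$ is injective, so one can take $c_H=1/\sigma_{\min}^+(D)$, the reciprocal of the smallest nonzero singular value of $D$.
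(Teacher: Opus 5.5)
Your proposal is correct and follows the paper's proof. Both compare $\mathcal{R}_\lambda(g_\lambda)\le\mathcal{R}_\lambda(g^\star)$ for a divergence-free competitor and rearrange to $\mathbb{E}\|x-g_\lambda\|_2^2\le\mathbb{E}\|x-g^\star\|_2^2-\lambda\,\mathbb{E}\|Dg_\lambda\|_2^2$, which gives the stated bound immediately; the paper's subsequent appeal to $\mathrm{dist}(u,\mathcal{H})\le c_H\|Du\|_2$ plays no logical role, as you correctly observe.
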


\noindent

In this section, we have established the theoretical foundations of PIBERT, providing rigorous derivations of its physics-informed embeddings, attention mechanism, and loss function. By integrating Fourier and wavelet embeddings, enforcing physics-based constraints within self-attention, and minimizing PDE residuals in the loss function, PIBERT represents a significant advancement in transformer-based scientific modeling.

\subsubsection{Physics-Inspired Regularization Terms}
\label{sec:loss-essentials}
We combine a data fidelity loss with \emph{physics-inspired regularizers} that
bias the model toward incompressibility and smoothness, rather than enforcing
the full Navier–Stokes momentum residual explicitly:

\begin{equation}
\mathcal{L}_{\mathrm{recon}}
= \frac{1}{|\Omega|}
  \sum_{(i,j)\in\Omega}
  \big\|\hat{y}_{ij} - y_{\mathrm{true},ij}\big\|_2^2,
\label{eq:recon-loss}
\end{equation}

\begin{align}
\mathcal{L}_{\mathrm{div}}
&= \frac{1}{|\Omega|}\sum_{(i,j)\in\Omega}
   \big(\nabla\!\cdot\!\hat{\mathbf u}\big)_{ij}^{2}, \\
\mathcal{L}_{\mathrm{lap}}
&= \frac{1}{|\Omega|}\sum_{(i,j)\in\Omega}
   \Big(\big\|\Delta \hat{u}\big\|_{2}^{2}
       +\big\|\Delta \hat{v}\big\|_{2}^{2}\Big)_{ij}, \\[2pt]
\mathcal{L}_{\mathrm{reg}}
&= \lambda_{\mathrm{div}}\,\mathcal{L}_{\mathrm{div}}
 +  \lambda_{\mathrm{lap}}\,\mathcal{L}_{\mathrm{lap}} .
\label{eq:phys-loss}
\end{align}

\begin{equation}
\mathcal{L}_{\mathrm{bnd}}
= \frac{1}{|M|}\sum_{(i,j)\in M}
  \big\|\hat{\mathbf u}_{ij} - \mathbf u_{\mathrm{true},ij}\big\|_2^2 .
\end{equation}

\begin{equation}
\mathcal{L}_{\mathrm{total}}
= \mathcal{L}_{\mathrm{recon}}
+ \lambda_{\mathrm{reg}}\,\mathcal{L}_{\mathrm{reg}}
+ \lambda_{\mathrm{bnd}}\,\mathcal{L}_{\mathrm{bnd}} .
\end{equation}

Discrete Green/sum-by-parts identities used for analyzing these regularizers are stated in \Cref{lem:green-app}. A standard quadratic boundary penalty enforces Dirichlet data in the $\mu\to\infty$ limit; see \Cref{prop:dirichlet-penalty}. We emphasize that, in the current implementation, these terms act as physics-guided regularization rather than a full Navier–Stokes residual penalty. 

\subsection{Masked Physics Prediction (MPP)}

The Masked Physics Prediction (MPP) task in PIBERT is inspired by the masked language modeling (MLM) approach used in BERT, but adapted to the physics domain, where missing field values must adhere to governing physical laws. In NLP, MLM randomly masks a fraction of input tokens, and the model is trained to reconstruct them using contextual information \cite{berend2023masked}. In physics-informed learning, however, missing field values cannot be arbitrarily inferred based solely on data correlations; instead, they must conform to differential equations and boundary conditions that govern physical systems \cite{hao2022physics}. PIBERT extends the MLM concept by randomly masking portions of a continuous physical field, such as velocity, pressure, or temperature, and requiring the model to infer these missing values in a way that respects the underlying physics.

The motivation for MPP is to encourage PIBERT to develop embeddings that capture both local and global physical dependencies. Unlike PINNs and standard PDE solvers that require direct access to governing equations at all points, PIBERT learns to fill in missing physics values by leveraging self-attention mechanisms, which propagate information across spatial-temporal domains. This results in a model that generalizes better across different boundary conditions and PDE structures. 

To ensure that PIBERT does not simply interpolate missing values based on statistical patterns, but rather learns to respect physical constraints, the masking process is carefully structured. Instead of uniformly dropping values, PIBERT applies a structured masking scheme where missing values are informed by physics constraints. In this approach, $80\%$ of the masked values are completely removed from the input, forcing the model to reconstruct them solely from its learned representations. Another $10\%$ of the masked values are replaced with random noise drawn from a physics-aware distribution, challenging the model to denoise and enforce physically consistent predictions. The remaining $10\%$ of the masked values are left unchanged, ensuring that PIBERT remains aware of absolute field values and does not learn to ignore known information.

A natural question arises: why is random masking an effective strategy in physics-informed learning? In conventional PDE solvers, missing values are typically interpolated using explicit numerical schemes, while in generative models such as variational autoencoders (VAEs) \cite{yang2022learning} and physics-informed GANs, missing data is imputed via sampling from a learned latent space \cite{garnier2025meshmask},  \cite{zhou2024masked},  \cite{taghizadeh2024multi}. PIBERT takes a different approach—it does not explicitly enforce numerical interpolation but instead learns physics-aware embeddings through self-attention, leveraging long-range dependencies across a field. This enables PIBERT to capture the fundamental physics of the system without requiring explicit PDE constraints during inference.

\subsection{Equation Consistency Prediction (ECP)}

In addition to reconstructing missing field values, PIBERT is pre-trained to ensure that its learned representations comply with the governing equations of physical systems. This is achieved through Equation Consistency Prediction (ECP), a self-supervised classification task designed to reinforce physical validity within the model’s learned embeddings.

In most PDE-driven physical processes, solutions must satisfy strict mathematical constraints, including conservation laws, balance equations, and boundary conditions. Traditional solvers explicitly enforce these constraints, while PINNs incorporate them as soft constraints in the loss function \cite{zhang2024physics}. PIBERT, however, learns an implicit understanding of these constraints by classifying whether a given physics field satisfies its corresponding governing equation. This enables the model to internalize the difference between physically plausible and non-physical solutions, improving robustness and generalization.

Mathematically, let $\mathcal{N}(u)$ be the differential operator that governs a system, such that a valid solution must satisfy:

\begin{equation}
    \mathcal{N}(u) \approx 0
\end{equation}

where $\mathcal{N}(u)$ could represent equations such as:
\begin{equation}
    \frac{\partial u}{\partial t} + u \cdot \nabla u + \frac{1}{\rho} \nabla p - \nu \nabla^2 u = 0 \quad \text{(Navier-Stokes)}
\end{equation}
\begin{equation}
    \frac{\partial u}{\partial t} - \alpha \nabla^2 u = 0 \quad \text{(Heat Equation)}
\end{equation}
\begin{equation}
    \frac{\partial^2 u}{\partial t^2} - c^2 \nabla^2 u = 0 \quad \text{(Wave Equation)}
\end{equation}

To construct a dataset for training ECP, we generate solution pairs $(u_{\text{valid}}, u_{\text{invalid}})$. The valid solutions are obtained from numerical solvers that exactly satisfy the governing equations, while invalid solutions are generated by perturbing valid solutions through random noise, incorrect boundary conditions, or omitted PDE terms. PIBERT is then trained as a binary classifier, minimizing the equation consistency loss:

\begin{equation}
    \mathcal{L}_{\text{ECP}} = -\sum_{i}\Big( y_i \log \hat{y}_i + (1-y_i)\log(1-\hat{y}_i)\Big).
\end{equation}

where $y_i = 1$ if the sample is a valid physics solution, and $y_i = 0$ otherwise.

A key challenge in designing ECP is ensuring that the incorrect PDE solutions are sufficiently realistic. If the incorrect solutions are overly simplistic (e.g., random noise), PIBERT may simply learn to classify based on superficial artifacts rather than understanding true physics consistency. To mitigate this, PIBERT incorporates a gradient-based adversarial perturbation strategy \cite{xue2023diffusion}, where incorrect solutions are generated by making minimal modifications that still violate the PDE constraints. This forces PIBERT to learn deep physics-informed features, rather than relying on simple pattern recognition.

Unlike conventional regression-based loss functions, which directly minimize deviations from known PDE solutions, ECP provides an additional layer of self-supervised validation that is particularly useful when working with sparse or incomplete physics datasets. PIBERT learns not only to reconstruct missing values but also to ensure that its predictions remain physically plausible.

\subsection{PIBERT algorithm and complexity}
\label{sec:pibert-alg-complex}

PIBERT embeds grid fields with a hybrid spectral encoder (Fourier branch + tight-frame branch), fuses them via a softmax gate, projects to the model width, tokenizes (with optional parameter token), and applies $L$ transformer encoder layers with physics-biased attention. A lightweight decoder maps tokens back to grids, training minimizes a data term plus physics regularizers and (optionally) MPP/ECP. Algorithm \ref{alg:pibert-compact} provides a detailed implementation guide on training and preparing the model from scratch.

Now let $H{\times}W$ be the grid, $P$ the patch size so $N\!\approx\!HW/P^2$ tokens, width $d$, FFN width $d_{\mathrm{ff}}\!\approx\!4d$, and $D$ the pre-token feature channels. Per layer, the transformer dominates for moderate $N$:
\[
\underbrace{O\!\big(BN^2 d\big)}_{\text{self-attn}} \;+\; \underbrace{O\!\big(BN d^2 + BN d\,d_{\mathrm{ff}}\big)}_{\text{QKV/out + MLP}}.
\]
The hybrid encoder adds
\[
\underbrace{O\!\big(BC\,HW\log(HW) + BD\,HW\log(HW)\big)}_{\text{rFFT/irFFT}} \;+\; \underbrace{O\!\big(BC\,s\,k^2 HW\big)}_{\text{tight-frame filters}} \;+\; \underbrace{O\!\big(BHW\,D\,d\big)}_{\text{$1{\times}1$ proj}},
\]
with per-frequency mixing $O(BmCD)$ negligible when the kept spectral area $m\ll HW$. Computing the residual proxy on-grid and pooling to tokens is $O(BHW)+O(BN)$ and is small vs.\ attention. Peak activation memory is $O(BN^2)$ for vanilla attention (or $O(BN d)$ with a memory-efficient kernel); hybrid-encoder activations are $O(B(D{+}C)HW)$ \footnote{Dominant cost: for large $N$ (small $P$), attention $O(LBN^2 d)$ dominates. 
The Fourier branch dominates only when $N$ is small, the kept band $m\!\approx\!HW$, and $C,D$ are large. 
The tight-frame cost is linear in $C$ and $HW$ for fixed $s,k$, and the $1{\times}1$ projection is comparable to or cheaper than one attention/MLP pass.}.

\begin{algorithm}[htbp]
\caption{PIBERT training step algorithm}
\label{alg:pibert-compact}
\begin{algorithmic}[1]
\Require Batch $x\!\in\!\mathbb{R}^{B\times C\times H\times W}$; params $p\!\in\!\mathbb{R}^q$; (optional) targets $y_{\text{true}}$.
\Statex \hspace{\algorithmicindent} Patch size $P$ ($N{=}HW/P^2$); encoder width $d$; layers $L$; bias $\lambda_{\text{att}}$.
\Ensure Prediction $\hat{y}$, total loss $\mathcal{L}_{\text{total}}$.
\smallskip
\State \textbf{Hybrid spectral encoding:}
       $F \leftarrow \textproc{FourierEncode}(x)$;\quad
       $W \leftarrow \textproc{WaveletFrame}(x)$;
\State \textbf{Fuse \& project:}
       $E \leftarrow \textproc{Fuse}(F,W;\alpha_F)$ (\Cref{sec:hybrid-embed});\;
       $Z \leftarrow \textproc{Conv}_{1\times 1}(E)\in\mathbb{R}^{B\times d\times H\times W}$;
\State \textbf{Tokenize \& condition:}
       $X^{(0)} \leftarrow \textproc{Patchify}(Z,P)$;\quad
       $X^{(0)} \leftarrow \textproc{AppendParamToken}(X^{(0)},p)$;
\State \textbf{Physics bias (once per step or periodically):}
       $R_{\text{grid}} \leftarrow \textproc{PhysicsResidual}(x)$ \ (\Cref{sec:phys-attn});
       $R_{\text{tok}} \leftarrow \textproc{Pool}_{P\times P}(R_{\text{grid}})$;
       $P^{(1)} \leftarrow -\lambda_{\text{att}}\,R_{\text{tok}}$;
\For{$\ell=1$ to $L$} \Comment{Transformer encoder with physics-biased attention}
  \State $Q,K,V \leftarrow \textproc{QKV}(\textproc{LN}(X^{(\ell-1)}))$;
  \State $\alpha \leftarrow \textproc{Softmax}\!\Big(\frac{QK^\top}{\sqrt{d_k}} + P^{(\ell)}\Big); $\quad
         $H \leftarrow \alpha V$;
  \State $X' \leftarrow X^{(\ell-1)} + H W_O$;\quad
         $X^{(\ell)} \leftarrow X' + \textproc{MLP}(\textproc{LN}(X'))$;
  \State  refresh $P^{(\ell+1)} \leftarrow -\lambda_{\text{att}}\,R_{\text{tok}}$;
\EndFor
\State \textbf{Decode to grid:} $\hat{y} \leftarrow \textproc{Decode}(X^{(L)})$;
\State \textbf{Losses:}
  $\mathcal{L}_{\text{sup}}$ from \Cref{sec:loss-essentials} 
  (uses \Cref{eq:recon-loss,eq:phys-loss}); \ 
  $\mathcal{L}_{\text{mpp}}$ from \Cref{eq:mpp};\ 
  $\mathcal{L}_{\text{ecp}}$ from ECP (\Cref{sec:ss-objectives-short});
\State \textbf{Total \& update:}
  $\mathcal{L}_{\text{total}}
  = \mathcal{L}_{\text{sup}}\ (\text{if }y_{\text{true}}) 
  + \lambda_{\text{mpp}}\mathcal{L}_{\text{mpp}}
  + \lambda_{\text{ecp}}\mathcal{L}_{\text{ecp}}$; \quad
  $\theta \leftarrow \textproc{OptStep}(\nabla_\theta \mathcal{L}_{\text{total}})$;
\smallskip
\Statex\emph{Notes:} $R_{\text{grid}}$ uses divergence/momentum diagnostics; token bias is the pooled residual. 
\end{algorithmic}
\end{algorithm}

\section{Methodology}
\label{sec:method}
We aim to learn scalable and physics-faithful computational surrogates for multiscale mechanics-governed flow fields under sparse supervision. PIBERT addresses this goal through three components. First, it uses a hybrid Fourier-wavelet spectral encoder to capture both global structure and localized features. Second, it uses physics-biased self-attention based on PDE residual diagnostics to favor physically meaningful interactions. Third, it uses self-supervised pretraining through Masked Physics Prediction (MPP) and Equation Consistency Prediction (ECP) before task-specific fine-tuning. Together, these components improve data efficiency, stability, and multiscale field reconstruction.

\subsection{Datasets}
\paragraph{Benchmark-source overview.}
This empirical study uses the RealPDEBench releases \cite{hu2026realpdebench}. We study exactly two official real-world benchmarks: cylinder-real and FSI-real. Throughout the paper, benchmark-source facts reported by the official release are kept separate from the local learning representation used by the experiments, so that modality, resolution, split protocol, and preprocessing are not conflated.

\paragraph{Cylinder-real}
RealPDEBench cylinder-real packages bluff-body wake measurements built from real PIV velocity observations and associated benchmark metadata \cite{hu2026realpdebench}. At the benchmark-source level, the release reports real PIV $(u,v)$ observations at $128\times 256$, paired simulated $(u,v,p)$ fields at $64\times 128$, $92$ trajectories, $23{,}990$ frames, $400$~Hz sampling, $20$~s duration, and Reynolds numbers from $1800$ to $12000$. In the local data copy used for learning, the released real velocity stream decodes to tensors of shape $(T,2,64,128)$ per trajectory. We treat that decoded representation as the start of the learning pipeline, sample it every 20 native steps, resize each sampled frame to $(2,64,64)$, and train all compared models on the same next-step real-velocity prediction task. The resulting trajectory-level split is $73/9/10$ for train/validation/test, yielding $14{,}527/1{,}791/1{,}990$ paired samples.

\paragraph{FSI-real}
The FSI benchmark is the official real split of the tandem-cylinder vortex-induced-vibration setting from RealPDEBench \cite{hu2026realpdebench}. The source release provides benchmark fields at $128\times 128$ together with released physical metadata for the tandem-cylinder configuration and vibration setting. In the local learning pipeline, we use the released real-only velocity channels $(u,v)$, resize them to $64\times 64$, and evaluate all six models on the same next-step tensorized prediction task.

\paragraph{Source benchmark versus local learning representation}
For both benchmarks, the manuscript reports benchmark-source numerical provenance only when it is explicitly provided by the official release. The local learning problem is then stated separately in terms of tensor decoding, channel selection, resizing, sampling stride, and split protocol. We do not infer solver order, mesh order, or time-integration details when those quantities are not specified by the source benchmark documentation. This distinction is necessary for clarity and for separating benchmark-source facts from the local learning pipeline used in this study.
The study-wide benchmark inventory is summarized in \Cref{tab:benchmark-summary}, and the benchmark-source facts together with the explicit provenance limits are itemized in \Cref{tab:benchmark-source-details}.

\begin{table}[h]
\centering
\caption{Benchmark summary for the two RealPDEBench benchmarks and the local learning representation used in this study.}
\label{tab:benchmark-summary}
\resizebox{\linewidth}{!}{%
\begin{tabular}{p{2.2cm}p{3.1cm}p{3.0cm}p{2.0cm}p{4.0cm}p{3.1cm}}
\toprule
\textbf{Benchmark} & \textbf{Physical scenario} & \textbf{Source modalities} & \textbf{Source resolution} & \textbf{Local learning representation} & \textbf{Split summary} \\
\midrule
Cylinder-real & Real bluff-body wake behind a cylinder & Real PIV $(u,v)$ plus paired simulated $(u,v,p)$ in the official benchmark package & Real $128\times 256$; simulated $64\times 128$ & Released real velocity tensors decode to $64\times 128$, are sampled every 20 native steps, resized to $64\times 64$, and used for one-step $(u,v)$ prediction & Seed-42 trajectory split $73/9/10$, yielding $14{,}527/1{,}791/1{,}990$ sampled pairs \\
FSI-real & Official tandem-cylinder vortex-induced-vibration benchmark with real observations & Official real split with released flow fields and physical metadata & $128\times 128$ & Real-only $(u,v)$ channels resized to $64\times 64$ for the shared six-model comparison & Official real split reused as released, with the same tensorized protocol across all models \\
\bottomrule
\end{tabular}}
\end{table}

\begin{table}[h]
\centering
\caption{Benchmark-source facts used in the manuscript and the provenance limits kept explicit.}
\label{tab:benchmark-source-details}
\resizebox{\linewidth}{!}{%
\begin{tabular}{p{2.2cm}p{3.7cm}p{2.6cm}p{2.1cm}p{3.0cm}p{3.8cm}}
\toprule
\textbf{Benchmark} & \textbf{Operating parameters} & \textbf{Duration / frequency} & \textbf{Source grid} & \textbf{Available channels} & \textbf{Released metadata / provenance note} \\
\midrule
Cylinder-real & Reynolds-number range $1800$--$12000$ & $20$~s at $400$~Hz & Real $128\times 256$; simulated $64\times 128$ & Real $(u,v)$; simulated $(u,v,p)$ & The official release documents $92$ trajectories and $23{,}990$ frames. The local experiments use the released real velocity tensor only; solver order and mesh order are not reported by the source benchmark. \\
FSI-real & Tandem-cylinder vortex-induced-vibration setting; released physical-parameter metadata accompanies the benchmark & Official real split as released & $128\times 128$ & Released real flow fields; local study uses real-only $(u,v)$ & The benchmark release provides the governing physical scenario and split metadata. Missing solver, mesh, or numerical-scheme details are not reported here unless they appear in the official release. \\
\bottomrule
\end{tabular}}
\end{table}

\subsection{Model selection and Training}

The active comparison set is exactly six models: PIBERT, FourierFlow, FNO2d, PITT, DeepONet2d, and PINN. These models span physics-informed networks, spectral operators, and recent transformer-style surrogates. PIBERT keeps the same core architecture described in earlier \cref{sec:pibert} we do not modify the theorem/proof or algorithmic parts of the method for this study.

All compared models consume the local real-velocity tensor at time $t$ and predict the next sampled real-velocity tensor at time $t+1$ within the learning sequence. PIBERT uses the MPP/ECP-pretrained encoder followed by supervised fine-tuning with the physics-guided losses described in \Cref{sec:loss-essentials}. For the  benchmark-accuracy summaries, we use one complete run per model from the finalized comparison manifests under the same split and tensorized protocol, including PITT and FourierFlow. This supports a protocol-matched accuracy comparison, but it does \emph{not} justify a blanket claim that every reported auxiliary quantity comes from identical optimization budgets or identical training histories. Accordingly, the appendix cost table is reported only as a compact disclosure of model size and observed runtime.

The logged FSI validation convergence is shown in \Cref{fig:training-convergence}. These curves provide optimization context only and the reported results below remain tied to held-out test metrics.

\begin{figure}[htbp]
    \centering
    \includegraphics[width=\linewidth]{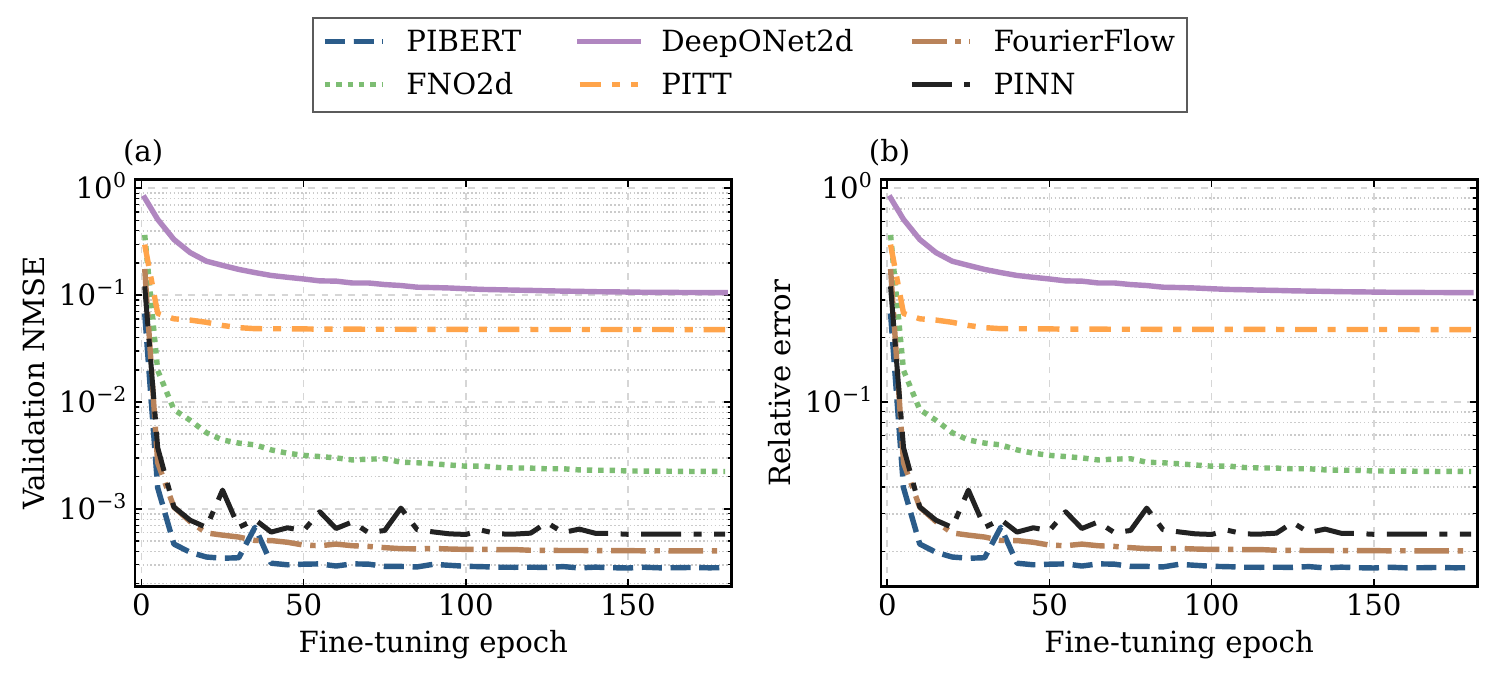}
    \caption{FSI-real validation convergence for the  comparison runs. Panel (a) shows the logged fine-tuning validation NMSE for all models on FSI Benchmark. Panel (b) reports the corresponding relative-error view from the same validation checkpoints.}
    \label{fig:training-convergence}
\end{figure}

\subsection{Evaluation protocol}

All comparisons reported use fixed train/validation/test partitions. For cylinder-real, this means the seed-42 trajectory-level split described above; for FSI-real, this means the official real split distributed with the benchmark. The shared supervised target is the real velocity field $(u,v)$ at the next sampled time step. Derived quantities such as speed magnitude $|\mathbf{u}|$ and vorticity $\omega$ are diagnostic views only and are not trained as separate targets. For cylinder-real, the local protocol samples every 20 native time steps, uses up to 200 frames per trajectory, and resizes each sampled frame from $64\times 128$ to $64\times 64$. For FSI-real, the local protocol uses the official real split, the released real-only $(u,v)$ fields, and the same $64\times 64$ learning representation across all six models.

We report four dataset-wide metrics in this paper: local mean absolute error
(LMAE), local Pearson correlation coefficient (LPCC), coefficient of determination
(\(R^2\)), and normalized mean-squared error (NMSE). The definitions used in this study are summarized in \Cref{tab:evaluation-metrics}. The reported local diagnostics include component-wise field panels, slice-based relative $\ell_2$ errors in the near-body, wake-core, and far-wake regions, and the FSI scale-separated summary. The detailed FSI optimization-cost table is reported in the appendix. The supplementary material is divided into broader supporting analyses and RealPDEBench-specific diagnostics. Sections S1--S3 examine additional datasets and diagnostic settings, including CFDBench, ICP Plasma, EAGLE, Tube, and Cavity cases. Sections S4--S5 provide additional RealPDEBench results and diagnostics for cylinder-real and FSI-real, including temporal traces, timestep predictions, multiscale summaries, and additional cross-sections. Aggregate figures and tables in this paper are dataset-wide. The field panels show deterministically selected held-out samples. The multiscale panels also use deterministic selection from the multiscale summary. These selections are reported explicitly so that the visual evidence remains auditable and is not used as the sole basis for any quantitative claim.

\begin{table}[htbp]
\centering
\footnotesize
\setlength{\tabcolsep}{3pt}
\renewcommand{\arraystretch}{1.45}
\caption{Evaluation metrics used for dataset-wide comparison. Here \(y_i\) and
\(\hat{y}_i\) denote the ground-truth and predicted values over the same test
index set \(\mathcal{I}\), \(\bar{y}\) and \(\bar{\hat{y}}\) denote their means,
all summations are over \(i\in\mathcal{I}\), and \(\varepsilon\) is a small
numerical-stability constant.}
\label{tab:evaluation-metrics}
\begin{tabularx}{\linewidth}{
p{1.5cm}
>{\centering\arraybackslash}p{5.8cm}
>{\raggedright\arraybackslash}X
p{2.2cm}}
\toprule
\textbf{Metric} & \textbf{Definition} & \textbf{Purpose} & \textbf{Direction} \\
\midrule

LMAE &
\(\displaystyle
\frac{1}{|\mathcal{I}|}\sum_i |\hat{y}_i-y_i|
\) &
Mean absolute prediction error over the evaluated field values &
Lower is better \\

\midrule

LPCC &
\(\displaystyle
\frac{\sum_i(\hat{y}_i-\bar{\hat{y}})(y_i-\bar{y})}
{\sqrt{\sum_i(\hat{y}_i-\bar{\hat{y}})^2}
\sqrt{\sum_i(y_i-\bar{y})^2}+\varepsilon}
\) &
Linear agreement between predicted and true field patterns &
Higher is better \\

\midrule

\(R^2\) &
\(\displaystyle
1-
\frac{\sum_i(\hat{y}_i-y_i)^2}
{\sum_i(y_i-\bar{y})^2+\varepsilon}
\) &
Explained variance relative to the test-set mean &
Higher is better \\

\midrule

NMSE &
\(\displaystyle
\frac{\sum_i(\hat{y}_i-y_i)^2}
{\sum_i y_i^2+\varepsilon}
\) &
Scale-normalized squared reconstruction error &
Lower is better \\

\bottomrule
\end{tabularx}
\end{table}

\subsection{Reproducibility}
The cylinder-real experiments use the seed-42 split described above, and the FSI-real experiments use the official real split. The reported runs were produced primarily with mixed precision on an NVIDIA RTX GPU with 24\,GB VRAM and a 12th-generation Intel Core i9 CPU. An A100-based local server, also paired with a 12th-generation Intel Core i9 CPU, was used only for auxiliary development checks and is not part of the reported results. The source code, preprocessing scripts, model configurations, and checkpoint records are maintained in the project repository and can be shared for review and reproduction purposes upon request.

\section{Benchmark and Results}
\label{sec:bench}
The empirical study is organized hierarchically: aggregate metrics first, then selected component-wise fields, then scale-separated and local diagnostics. This ordering is deliberate. Aggregate figures and tables carry the primary quantitative claims, while the field panels serve as interpretable examples of the same ranking.

\subsection{Cylinder-real benchmark}

\begin{figure}[htbp]
    \centering
    \includegraphics[width=\linewidth]{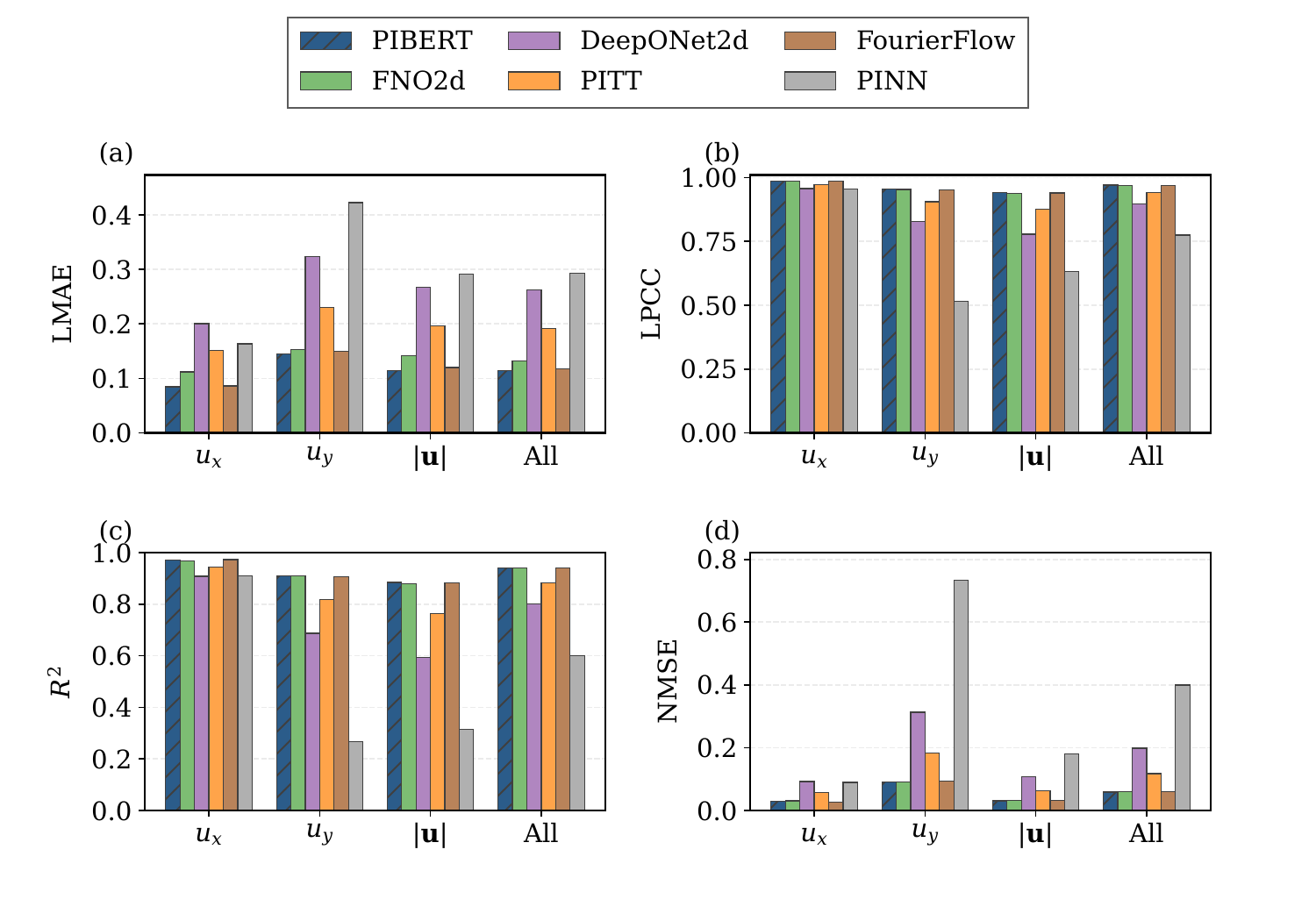}
    \caption{Aggregate performance on the RealPDEBench cylinder-real benchmark for PIBERT, FNO2d, DeepONet2d, PITT, FourierFlow, and PINN. Panel (a) reports LMAE, panel (b) reports LPCC, panel (c) reports $R^2$, and panel (d) reports NMSE for $u_x$, $u_y$, $|\mathbf{u}|$, and the aggregate ``All'' score over the cylinder-real test set. Lower is better for LMAE and NMSE, while higher is better for LPCC and $R^2$.}
    \label{fig:cyl-metrics}
\end{figure}

Figure~\ref{fig:cyl-metrics} gives the dataset-wide ranking on cylinder-real. PIBERT achieves the best aggregate accuracy with All NMSE $0.05875$ and All LPCC $0.97019$, and it is also the top model on aggregate LMAE and $R^2$. The advantage is especially clear in the cross-stream component, where PIBERT reaches $u_y$ NMSE $0.08984$ versus $0.73364$ for PINN. The best baselines remain competitive on selected sub-metrics. For example, FourierFlow is slightly better on the isolated $u_x$ NMSE, and FNO2d is close on $u_y$ but the full aggregate ranking favors PIBERT once both velocity components and $|\mathbf{u}|$ are considered jointly.

\begin{table}[h]
\centering
\caption{Dataset-wide aggregate accuracy summary for the RealPDEBench cylinder-real comparison set. Values are transcribed from the finalized  summary used for \Cref{fig:cyl-metrics}.}
\label{tab:cyl-aggregate-quant}
\begin{tabular}{lcccc}
\toprule
\tblhead{Model} & \tblhead{All NMSE} & \tblhead{All LMAE} & \tblhead{All LPCC} & \tblhead{All $R^2$} \\
\midrule
PIBERT & \textbf{0.05875} & \textbf{0.11448} & \textbf{0.97019} & \textbf{0.94123} \\
FourierFlow & 0.05931 & 0.11797 & 0.96989 & 0.94066 \\
FNO2d & 0.06003 & 0.13216 & 0.96959 & 0.93995 \\
PITT & 0.11729 & 0.19095 & 0.94118 & 0.88267 \\
DeepONet2d & 0.19846 & 0.26190 & 0.89658 & 0.80146 \\
PINN & 0.39938 & 0.29292 & 0.77492 & 0.60046 \\
\bottomrule
\end{tabular}
\end{table}

\begin{table}[h]
\centering
\caption{Cylinder-real component comparison between PIBERT and PINN. This table isolates the largest  component-wise contrast in the benchmark summary.}
\label{tab:cyl-component-quant}

\begin{tabular}{lcccc}
\toprule
\tblhead{Component} & \tblhead{PIBERT NMSE} & \tblhead{PINN NMSE} & \tblhead{PIBERT LPCC} & \tblhead{PINN LPCC} \\
\midrule
$u_x$ & \textbf{0.02993} & 0.08954 & \textbf{0.98490} & 0.95413 \\
$u_y$ & \textbf{0.08984} & 0.73364 & \textbf{0.95407} & 0.51642 \\
$|\mathbf{u}|$ & \textbf{0.03051} & 0.18036 & \textbf{0.94085} & 0.63154 \\
All & \textbf{0.05875} & 0.39938 & \textbf{0.97019} & 0.77492 \\
\bottomrule
\end{tabular}
\end{table}

\begin{table}[h]
\centering
\caption{Per-sample inference times for the cylinder-real comparison set. Values are reported in milliseconds per sample and are transcribed from the finalized cylinder-real accuracy--cost summary used for the supplementary performance-scatter figure.}
\label{tab:cyl-inference}
\begin{tabular}{lc}
\toprule
\tblhead{Model} & \tblhead{Inference time (ms/sample)} \\
\midrule
PIBERT & 5.85 \\
FourierFlow & 3.77 \\
FNO2d & 3.31 \\
PITT & 1.98 \\
DeepONet2d & 0.62 \\
PINN & 0.47 \\
\bottomrule
\end{tabular}
\end{table}

Tables~\ref{tab:cyl-aggregate-quant} and \ref{tab:cyl-component-quant} summarize the cylinder-real ranking and highlight PIBERT's stronger recovery of the cross-stream component and overall correlation structure relative to PINN.

\begin{figure}[htbp]
    \centering
    \includegraphics[width=\linewidth]{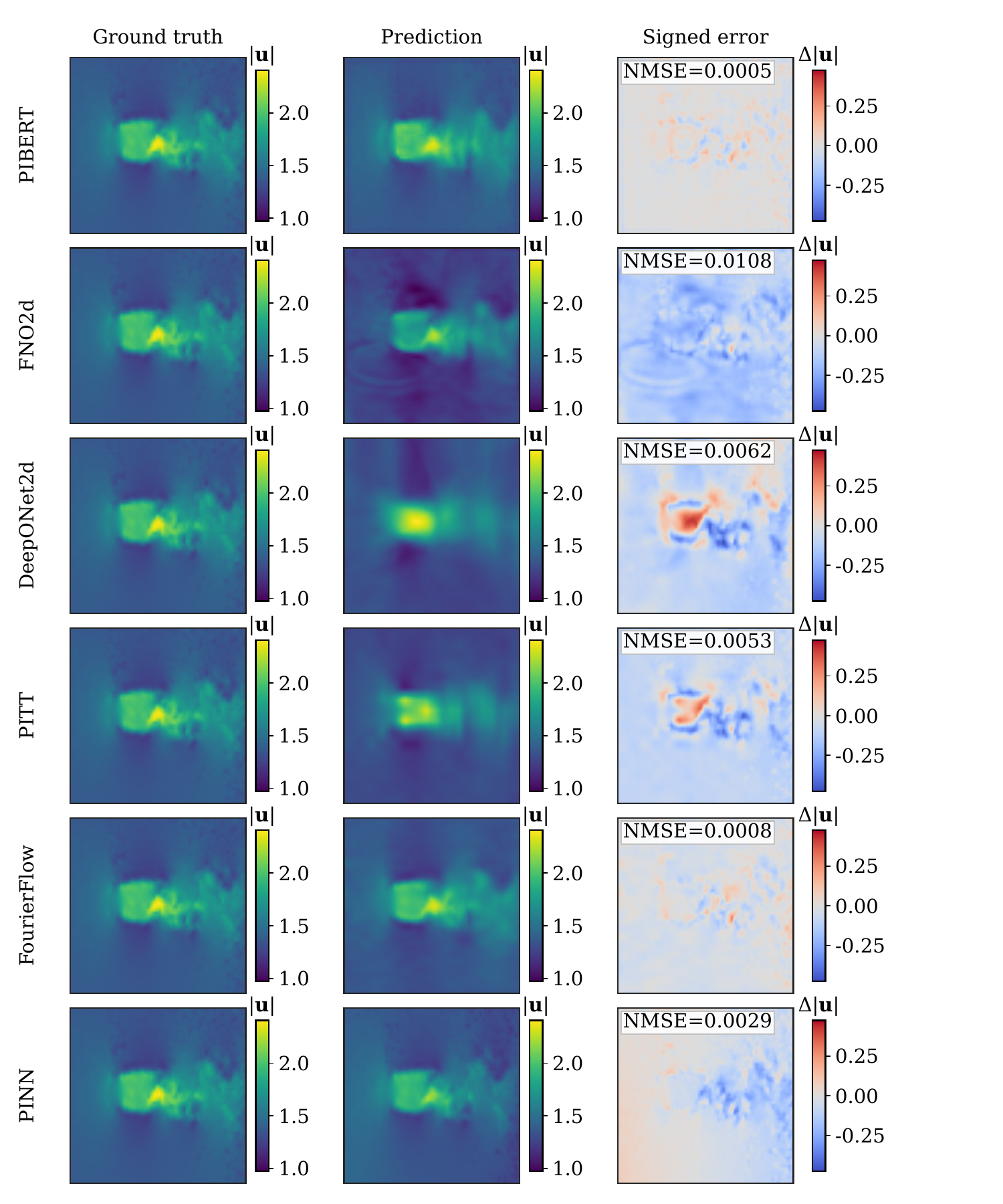}
    \caption{Comparison of ground truth, prediction, and signed error on a held-out RealPDEBench cylinder-real test sample across PIBERT, FNO2d, DeepONet2d, PITT, FourierFlow, and PINN for $|\mathbf{u}|$.}
    \label{fig:cyl-overview}
\end{figure}

\begin{figure}[htbp]
    \centering
    \includegraphics[width=\linewidth]{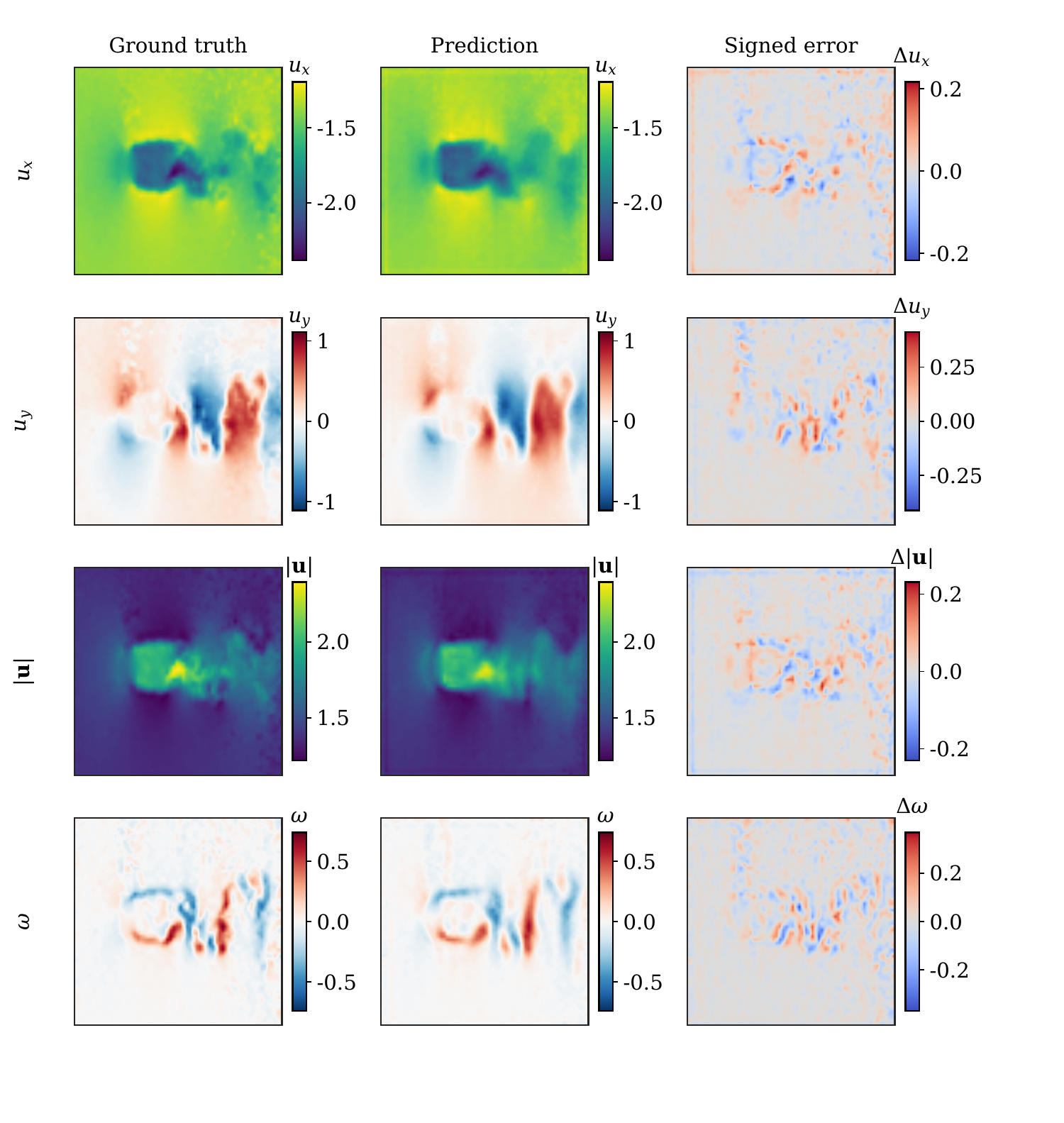}
    \caption{PIBERT prediction against ground truth, with signed error, for $u_x$, $u_y$, $|\mathbf{u}|$, and $\omega$ on the same held-out RealPDEBench cylinder-real sample.}
    \label{fig:cyl-pibert}
\end{figure}

Table~\ref{tab:cyl-inference} makes the inference-side tradeoff explicit for cylinder-real. PIBERT has the highest per-sample inference time in this comparison at $5.85$~ms/sample, versus $3.77$ for FourierFlow, $3.31$ for FNO2d, $1.98$ for PITT, $0.62$ for DeepONet2d, and $0.47$ for PINN. The cylinder-real ranking should therefore still be read as an accuracy comparison rather than an efficiency claim. We do not report a separate cylinder optimization table because the most complete optimization logs are available for FSI-real. The corresponding optimizer-side disclosure is moved to Appendix \Cref{tab:fsi-cost}.

The selected field panels in \Cref{fig:cyl-overview,fig:cyl-pibert} make the component-wise differences visible. PIBERT preserves the recirculation bubble, the downstream velocity deficit, and the cross-stream wake structure with smaller signed errors than the weaker baselines. Several operator-style baselines remain visually competitive on parts of the wake, whereas the weaker models more often smear or distort the wake-core structure. These panels are shown only to explain where the aggregate advantage appears. They do not replace the dataset-wide ranking in \Cref{fig:cyl-metrics}. The same interpretation is also supported by the supplementary Cylinder-real diagnostics. Figures S5.1--S5.3 show that PIBERT follows local phase and amplitude changes more consistently and preserves coherent wake structure across consecutive sampled steps. Figure S5.6 further shows that this pattern is not limited to isolated frames. 

\begin{figure}[htbp]
    \centering
    \includegraphics[width=\linewidth]{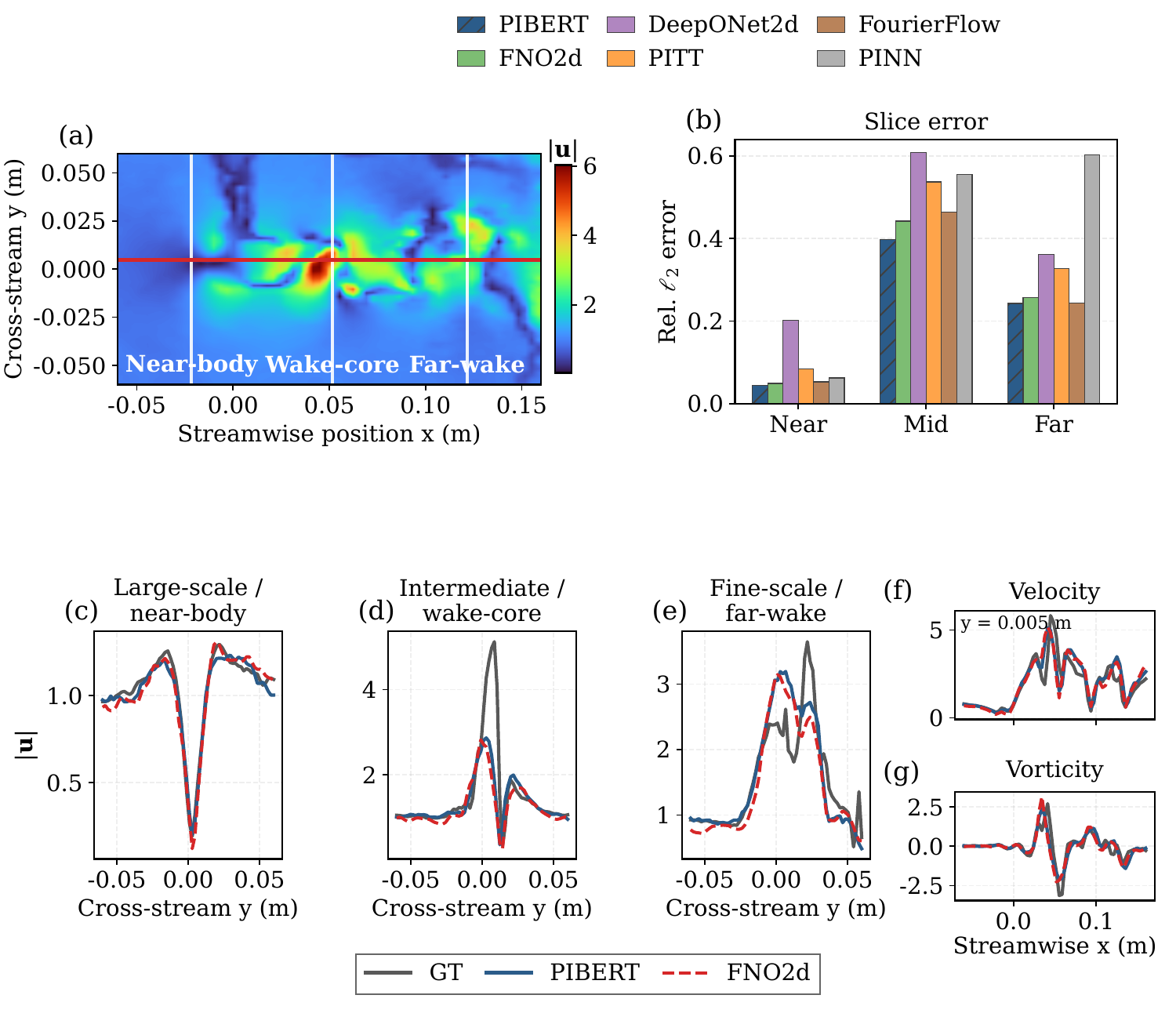}
     \caption{Scale-separated and wake-line diagnostics for the RealPDEBench cylinder-real $|\mathbf{u}|$ sample: (a) slicing layout, (b) slice relative $\ell_2$ error, (c) near-body profile, (d) wake-core profile, (e) far-wake profile, (f) wake-line velocity, and (g) wake-line vorticity.}
    \label{fig:cyl-multiscale}
\end{figure}

Figure~\ref{fig:cyl-multiscale} makes the multiscale claim more explicit than a contour plot alone can. On the selected $|\mathbf{u}|$ sample, PIBERT gives the lowest slice error in all three displayed regions and tracks the wake-line velocity and vorticity curves closely. The displayed baselines remain competitive on portions of the wake, but PIBERT is best on the three shown slice errors for this selected sample. Across the full cylinder-real test set of $1{,}990$ pairs, PIBERT wins $86.9\%$ of near-body slices, $40.7\%$ of wake-core slices, and $49.7\%$ of far-wake slices, with top-two rates of $98.4\%$, $68.3\%$, and $76.3\%$, respectively. The frequency-band story is intentionally more nuanced: PIBERT wins $40.8\%$ of low-band cases, $34.7\%$ of mid-band cases, and $23.8\%$ of high-band cases, so the manuscript does not claim universal spectral or high-frequency dominance.

The same caution applies to strict physics proxies. In the supplementary drag-proxy analysis, the ground-truth mean drag proxy is $0.05773$, while PIBERT reaches $0.06253$. This is strong and competitive, but FourierFlow ($0.06236$) and FNO2d ($0.06000$) are slightly closer on that single proxy. Accordingly, our cylinder-real claim is that PIBERT is the strongest aggregate-accuracy model with very strong near-body and wake-core fidelity, not that it dominates every strict spectral or wake-integral diagnostic.

\subsection{FSI-real benchmark}

\begin{figure}[htbp]
    \centering
    \includegraphics[width=\linewidth]{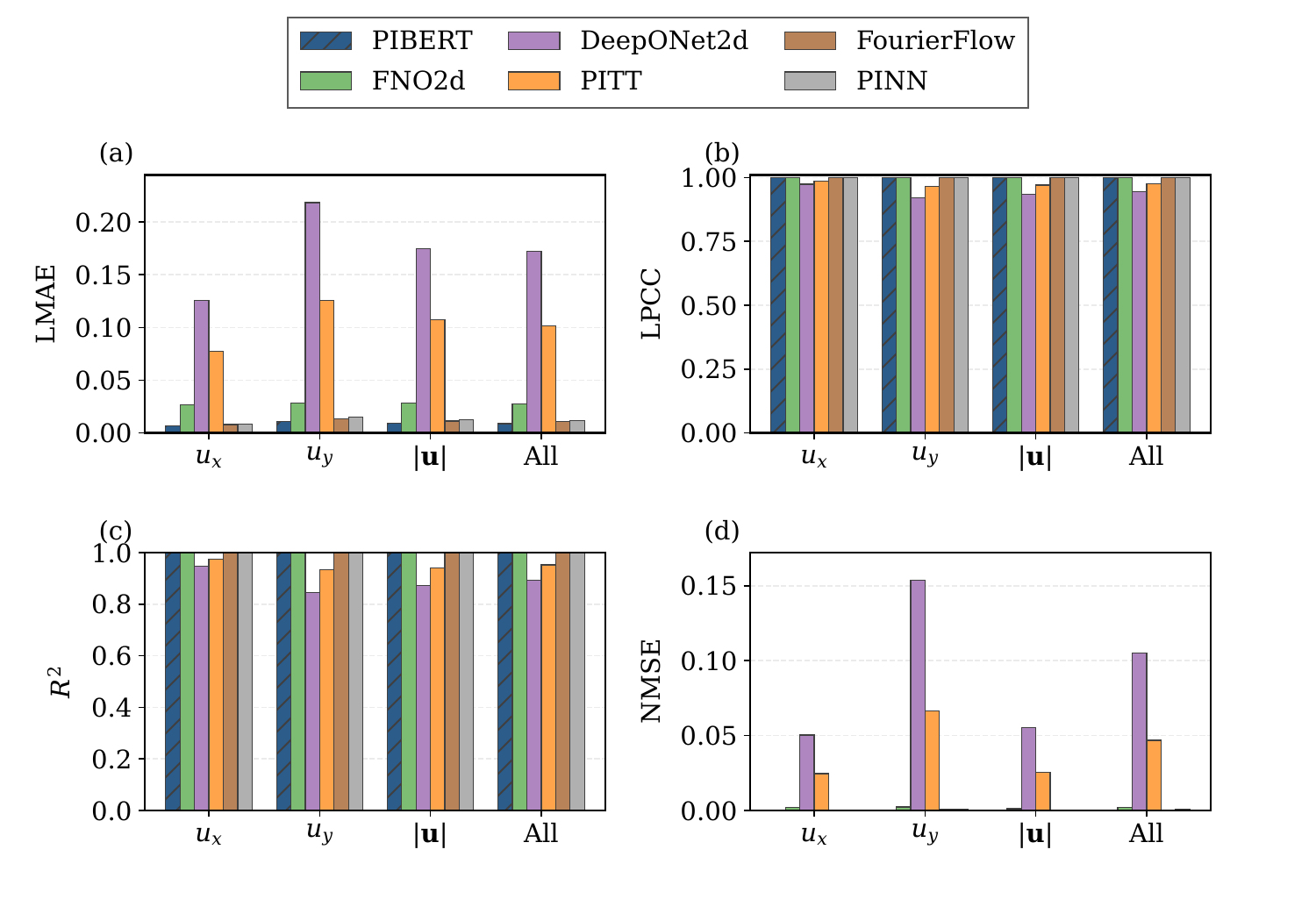}
    \caption{Aggregate performance on the RealPDEBench FSI-real benchmark for PIBERT, FNO2d, DeepONet2d, PITT, FourierFlow, and PINN. Panel (a) reports LMAE, panel (b) reports LPCC, panel (c) reports $R^2$, and panel (d) reports NMSE for $u_x$, $u_y$, $|\mathbf{u}|$, and the aggregate ``All'' score over the FSI-real test set. Lower is better for LMAE and NMSE, while higher is better for LPCC and $R^2$.}
    \label{fig:fsi-metrics}
\end{figure}

On FSI-real, PIBERT again leads the dataset-wide comparison. Figure~\ref{fig:fsi-metrics} shows All NMSE $0.00026954$ for PIBERT versus $0.00040231$ for the best baseline FourierFlow, with the remaining baselines trailing further behind. The advantage is consistent across the two  velocity components, and the same official real split and shared local $64\times 64$ learning representation are used for all six models. This result indicates best accuracy under the stated protocol, not blanket superiority on every auxiliary diagnostic.

\begin{table}[htbp]
\centering
\caption{Dataset-wide aggregate accuracy summary for the RealPDEBench FSI-real comparison set. Values are transcribed from the finalized  summary used for \Cref{fig:fsi-metrics}.}
\label{tab:fsi-aggregate-quant}
\begin{tabular}{lccccc}
\toprule
\tblhead{Model} & \tblhead{All NMSE} & \tblhead{All LMAE} & \tblhead{All LPCC} & \tblhead{All $R^2$} & \tblhead{All RelL2} \\
\midrule
PIBERT & \textbf{0.000270} & \textbf{0.008640} & \textbf{0.999864} & \textbf{0.999729} & \textbf{0.016418} \\
FourierFlow & 0.000402 & 0.010626 & 0.999797 & 0.999595 & 0.020058 \\
PINN & 0.000580 & 0.011629 & 0.999708 & 0.999416 & 0.024076 \\
FNO2d & 0.002248 & 0.027409 & 0.998873 & 0.997736 & 0.047416 \\
PITT & 0.046901 & 0.101337 & 0.976101 & 0.952772 & 0.216566 \\
DeepONet2d & 0.105176 & 0.172149 & 0.945567 & 0.894091 & 0.324309 \\
\bottomrule
\end{tabular}
\end{table}

\begin{table}[htbp]
\centering
\caption{FSI-real component comparison between PIBERT and the strongest  baseline FourierFlow. ``NMSE gain'' reports the relative reduction of PIBERT with respect to FourierFlow.}
\label{tab:fsi-component-quant}
\resizebox{\linewidth}{!}{%
\begin{tabular}{lccccccc}
\toprule
\tblhead{Component} & \tblhead{PIBERT NMSE} & \tblhead{FourierFlow NMSE} & \tblhead{NMSE gain} & \tblhead{PIBERT LPCC} & \tblhead{FourierFlow LPCC} & \tblhead{PIBERT $R^2$} & \tblhead{FourierFlow $R^2$} \\
\midrule
$u_x$ & \textbf{0.000143} & 0.000197 & 27.2\% & \textbf{0.999926} & 0.999898 & \textbf{0.999851} & 0.999796 \\
$u_y$ & \textbf{0.000381} & 0.000584 & 34.7\% & \textbf{0.999809} & 0.999708 & \textbf{0.999619} & 0.999416 \\
$|\mathbf{u}|$ & \textbf{0.000149} & 0.000226 & 34.2\% & \textbf{0.999829} & 0.999739 & \textbf{0.999657} & 0.999479 \\
All & \textbf{0.000270} & 0.000402 & 33.0\% & \textbf{0.999864} & 0.999797 & \textbf{0.999729} & 0.999595 \\
\bottomrule
\end{tabular}}
\end{table}

Tables~\ref{tab:fsi-aggregate-quant} and \ref{tab:fsi-component-quant} confirm that PIBERT is first on the benchmark-wide accuracy metrics and that its gain over FourierFlow remains visible on both velocity components, on speed magnitude, and on the aggregate score.

\begin{figure}[htbp]
    \centering
    \includegraphics[width=\linewidth]{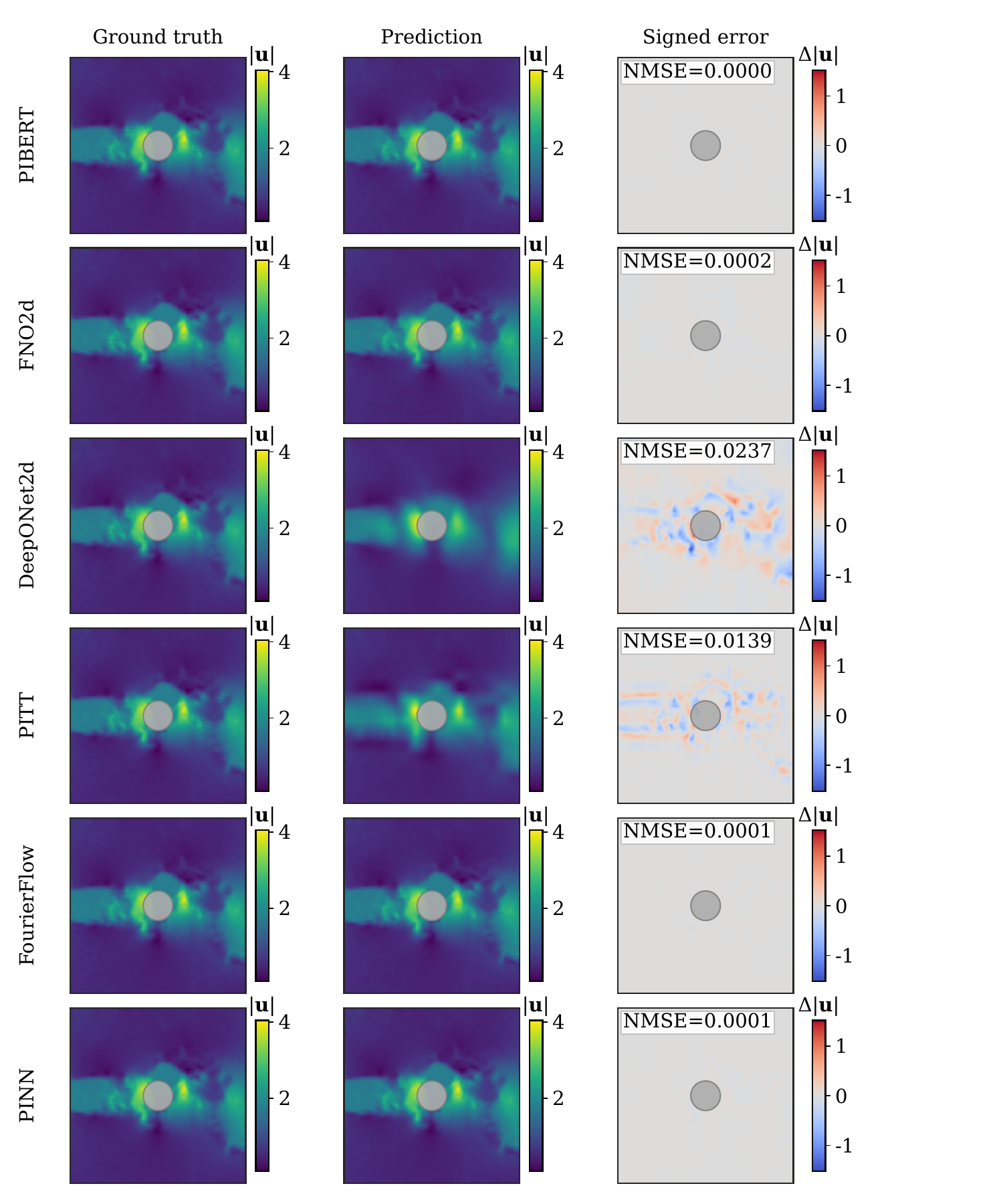}
    \caption{Comparison of ground truth, prediction, and signed error on a held-out RealPDEBench FSI-real test sample across PIBERT, FNO2d, DeepONet2d, PITT, FourierFlow, and PINN for $|\mathbf{u}|$.}
    \label{fig:fsi-overview}
\end{figure}

\begin{figure}[htbp]
    \centering
    \includegraphics[width=\linewidth]{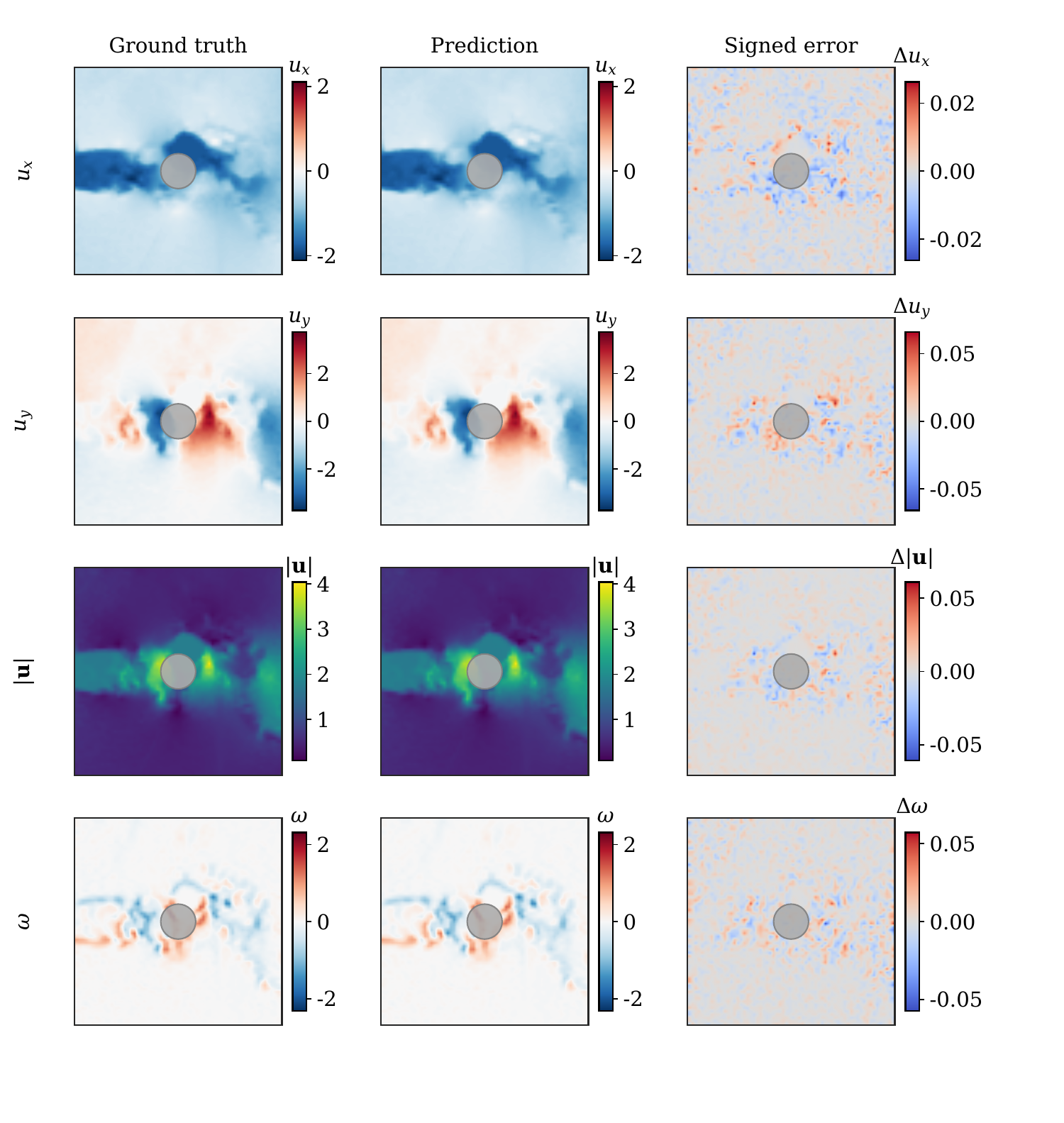}
    \caption{PIBERT prediction against ground truth, with signed error, for $u_x$, $u_y$, $|\mathbf{u}|$, and $\omega$ on the same held-out RealPDEBench FSI-real sample.}
    \label{fig:fsi-pibert}
\end{figure}

The FSI-real panels in \Cref{fig:fsi-overview,fig:fsi-pibert} are visually consistent with the aggregate metrics: PIBERT preserves the local wake structure around the tandem-cylinder configuration with the smallest signed-error footprint among the compared models. The stronger operator baselines remain visually competitive on parts of the field, while weaker baselines show larger structural distortions. As in the cylinder-real case, these panels explain \emph{where} the dataset-wide advantage appears and are not used as a stand-alone performance evidence.

\begin{figure}[htbp]
    \centering
    \includegraphics[width=\linewidth]{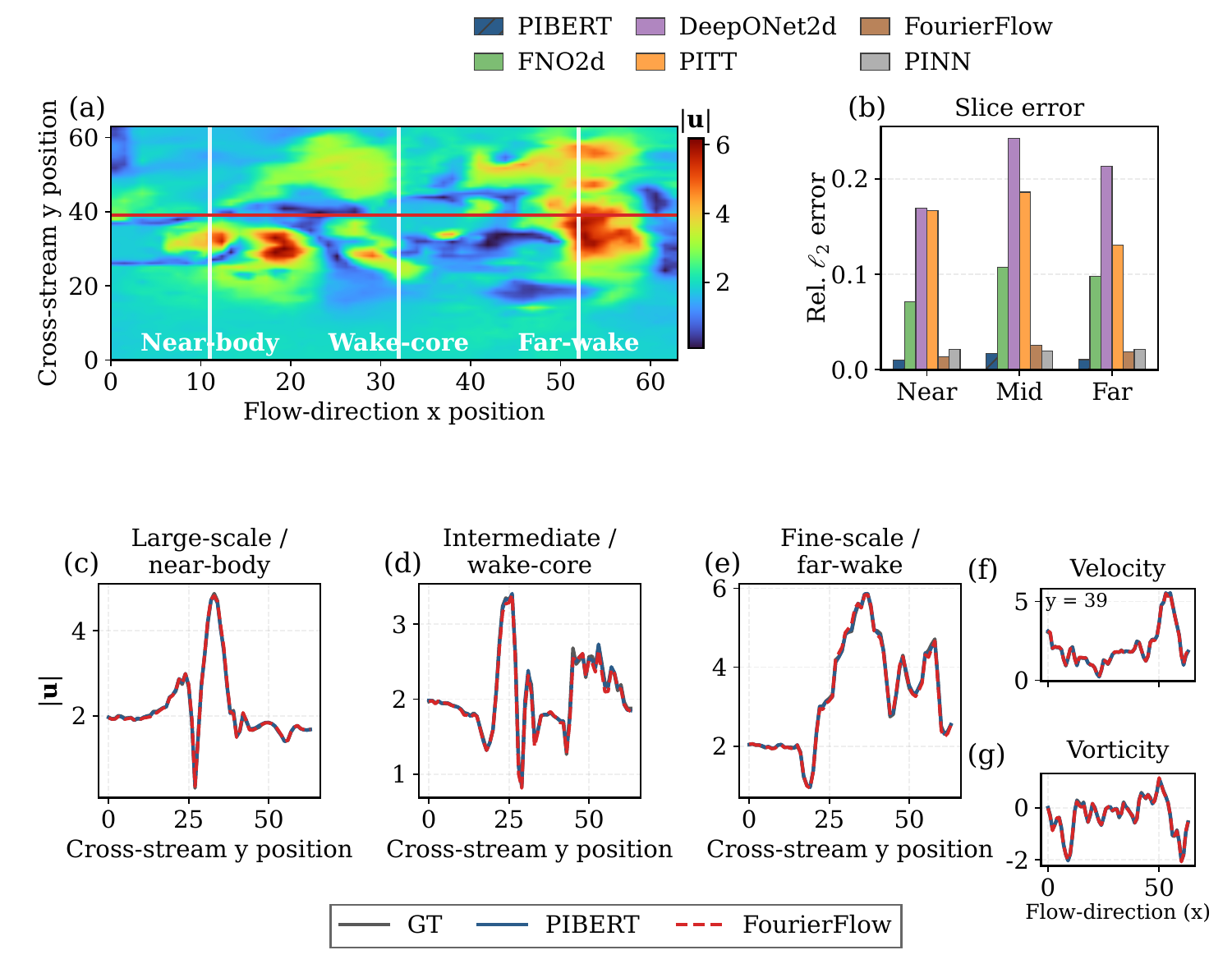}
     \caption{Scale-separated and wake-line diagnostics for the RealPDEBench FSI-real $|\mathbf{u}|$ sample: (a) slicing layout, (b) slice relative $\ell_2$ error, (c) near-body profile, (d) wake-core profile, (e) far-wake profile, (f) wake-line velocity, and (g) wake-line vorticity.}
    \label{fig:fsi-multiscale}
\end{figure}

Figure~\ref{fig:fsi-multiscale} localizes the FSI-real multiscale advantage on a  held-out sample. On the displayed slices, PIBERT gives the lowest relative $\ell_2$ error in all three shown regions: $0.0104$ near-body, $0.0167$ wake-core, and $0.0106$ far-wake, versus $0.0134$, $0.0255$, and $0.0188$ for the strongest displayed baseline FourierFlow. The same pattern appears on the wake-line traces, where PIBERT reaches velocity relative $\ell_2$ $0.0149$ versus $0.0219$ for FourierFlow and vorticity relative $\ell_2$ $0.0350$ versus $0.0475$. As with the cylinder-real multiscale figure, the panel is selected and is included to show \emph{where} the dataset-wide scale-separated advantage appears, not to replace the aggregated FSI-real evidence. We report the scale-separated benchmark table in our main text because it is the quantity most directly tied to the multiscale reconstruction claim while the optimizer-side is added in Appendix \Cref{tab:fsi-cost}.

\begin{table}[htbp]
\centering
\caption{Scale-separated and diagnostic summary for the FSI-real comparison set. Values are taken from the finalized FSI experiment logs and summary tables.}
\label{tab:fsi-multiscale}
\begin{tabular}{lccc}
\toprule
\textbf{Metric} & \textbf{PIBERT value} & \textbf{Best model} & \textbf{Best value} \\
\midrule
Scale-1 NMSE & 0.00026954 & PIBERT & 0.00026954 \\
Scale-2 NMSE & 0.00014469 & PIBERT & 0.00014469 \\
Scale-4 NMSE & 0.00006776 & PIBERT & 0.00006776 \\
Scale-8 NMSE & 0.00002994 & PIBERT & 0.00002994 \\
Vorticity MSE & 0.00018599 & PIBERT & 0.00018599 \\
Boundary MSE & 1.02852666 & PIBERT & 1.02852666 \\
Divergence MSE & 0.04030116 & DeepONet2d & 0.01261038 \\
Spectral L2 & 0.00017580 & FourierFlow & 0.00000198 \\
Spectral slope error & 0.00100403 & PINN & 0.00014111 \\
\bottomrule
\end{tabular}
\end{table}

Table~\ref{tab:fsi-multiscale} makes the dataset-wide multiscale claim explicit. PIBERT is best on all four coarse-to-fine aggregated NMSEs and also best on vorticity and boundary errors, while FourierFlow remains strongest on strict spectral L2 and PINN is best on the spectral-slope error. PIBERT also achieves the lowest first-level directional wavelet-detail NMSEs reported in the comparison set ($u$-LH1 $0.01003125$, $u$-HL1 $0.00265505$, $u$-HH1 $0.02804489$, $v$-LH1 $0.00297122$, $v$-HL1 $0.00435680$, $v$-HH1 $0.01495957$), which makes the scale-explicit advantage visible in FSI-real.

The dataset-wide slice diagnostics align with the selected panel in \Cref{fig:fsi-multiscale}. On the FSI multiscale evaluation set, PIBERT wins $87.9\%$ of near-body slices, $76.3\%$ of wake-core slices, and $91.4\%$ of far-wake slices, with corresponding top-two rates of $97.8\%$, $94.9\%$, and $98.6\%$. Its band win rates remain balanced across low, mid, and high bands ($48.5\%$, $55.4\%$, and $49.3\%$), with top-two rates of $85.2\%$, $86.0\%$, and $78.9\%$. This is the appropriate nuanced statement for FSI-real: PIBERT is the strongest model on the scale-separated reconstruction and local slice diagnostics, while FourierFlow remains the best strict spectral baseline. The supplementary FSI-real diagnostics show the same pattern in time. Figures S5.7--S5.10 show more consistent temporal tracking and more coherent spatial progression across consecutive sampled steps for PIBERT. Figures S5.11--S5.13 further show that the local-structure advantage remains visible across additional difficult held-out cases.

\subsection{Cross-benchmark multiscale interpretation}

\Cref{fig:cyl-multiscale,fig:fsi-multiscale,tab:fsi-multiscale} show where PIBERT performs better, especially in the near-body, wake-core, and multiscale flow regions. Across both benchmarks, the common pattern is more reliable reconstruction of the near-body region, wake-core organization, and downstream multiscale structure. On cylinder-real, this appears primarily as stronger recovery of separated wake geometry and cross-stream variation, while on FSI-real the same tendency becomes more systematic and extends across coarse-to-fine scale summaries, vorticity, and boundary-sensitive behavior. 

This distinction is physically important. In both benchmarks, we observe the hardest errors are not global amplitude mismatches. Rather, they arise in regions where advection, shear, recirculation, and geometry-induced interactions produce localized but dynamically important distortions. PIBERT performs best precisely on these diagnostics of local multiscale fidelity, whereas lighter baselines can still remain competitive on narrower spectral or integral-style summaries. Thus we conclude that in the cylinder-real benchmark  PIBERT is strongest on aggregate accuracy and wake-local reconstruction, even though FourierFlow and FNO2d remain highly competitive on selected spectral or drag-proxy-style checks. By contrast, the FSI-real evidence is stronger because the advantage persists not only in selected slices but also in the benchmark-level scale-separated summaries, indicating better preservation of coupled wake structure under a more spatially heterogeneous flow configuration. 

The cross-benchmark consistency also aligns with the methodological novelty of PIBERT. The hybrid Fourier-wavelet encoder represents both global oscillatory structure and localized distortions. The physics-aware tokenization and residual-biased self-attention then guide the model toward interactions that remain meaningful in physically active regions of the field. The MPP/ECP pretraining stage then reduces the dependence on learning these cross-scale relationships from limited supervised data alone. Overall, the results suggest that PIBERT's main contribution is that it more consistently preserves the physically consequential multiscale structure of real flow fields across two distinct benchmarks. The supplementary evidence extends this interpretation beyond the two main RealPDEBench benchmarks. The additional CFDBench, ICP Plasma, and EAGLE comparisons show lower error distributions and more localized error behavior for PIBERT relative to PINN in several non-identical physical settings. The Tube and Cavity embedding analyses further show that PIBERT maintains stronger fine-scale embedding--physics alignment, while the supplementary spectral diagnostics indicate better preservation of mid-to-high wavenumber content. These results are not used to replace the main RealPDEBench ranking, but they support the broader conclusion that the hybrid Fourier-wavelet encoder and physics-biased attention improve multiscale physical representation rather than only fitting two selected benchmark cases.

\section{Implications and Limitations}
\label{sec:implications}

The evidence now supports a narrower but stronger empirical claim. Across two RealPDEBench benchmarks, PIBERT is the best-accuracy model in the comparison once aggregate metrics, component-wise fields, multiscale slices, and local traces are considered together. On cylinder-real, the aggregate advantage is driven in large part by cross-stream structure recovery and near-body wake fidelity. On FSI-real, the gains persist under the official real split and remain visible across both field components and scale-separated diagnostics.

The evidence does \emph{not} support a claim that PIBERT is universally best on every diagnostic or that it is the cheapest model to run. Some narrow physics or frequency-domain summaries still favor lighter baselines, such as the cylinder-real drag proxy and the FSI strict spectral metrics. However, these isolated advantages do not change the main empirical pattern: PIBERT provides the strongest aggregate reconstruction and more consistent multiscale flow fidelity across the principal RealPDEBench metrics and diagnostics. The cost disclosure should therefore be interpreted as an accuracy-physics tradeoff rather than a weakness alone. PIBERT requires additional computation, but the added cost is tied to the model components that improve localized and scale-separated flow reconstruction.

The architectural template is also more general than the current PDE instantiation, but the residual bias itself is PDE-specific. For incompressible flow we use divergence and momentum-style residual diagnostics. For compressible flow, the same framework would need bias terms tied to mass, momentum, and energy balance, together with an equation-of-state closure and shock-aware diagnostics. For wave equations, the bias should reflect propagation structure, boundary reflections, and dispersion. For reaction--diffusion systems, the bias would need to encode reaction/diffusion imbalance together with conservation or positivity structure where appropriate. In that sense, the transformer scaffold is transferable, whereas the specific $R_{ij}$ construction must be redesigned for each PDE family. 

\begin{enumerate}
    \item \textbf{L1. Structured-grid dependence.} The experiments still operate on tensorized structured grids, even when the source benchmark is more complex than the local learning representation.
    \item \textbf{L2. Accuracy-cost tradeoff.} PIBERT requires more computation than lighter baselines, but this additional cost is associated with stronger aggregate reconstruction and better preservation of localized multiscale flow structures. Future work should reduce this cost while retaining the same physics-sensitive reconstruction behavior.
    \item \textbf{L3. PDE-specific residual bias.} The current attention bias is designed around incompressible-flow diagnostics and is therefore not a drop-in universal residual prior.
    \item \textbf{L4. Main benchmark scope and rollout horizon.} The primary claim-bearing evaluation is based on the cylinder-real and FSI-real RealPDEBench protocols, while supplementary CFDBench, ICP Plasma, EAGLE, Tube, and Cavity analyses provide broader supporting evidence. Longer-horizon rollout, direct structural-state prediction, and additional real-world engineering benchmarks remain open directions.
\end{enumerate}

\paragraph*{Matched future directions}
\begin{enumerate}
    \item \textbf{L1 $\rightarrow$ mesh-aware extensions.} Extend the current tensorized encoder to geometry-aware or unstructured discretizations so that the multiscale attention mechanism can operate on meshes and irregular domains without first collapsing them to a regular image grid.
    \item \textbf{L2 $\rightarrow$ sparse or hierarchical attention.} Reduce optimization cost through sparse, local, or hierarchical attention blocks, together with lighter multiscale fusion modules, so that the best-accuracy behavior is preserved at lower compute budgets.
    \item \textbf{L3 $\rightarrow$ modular PDE-family bias adapters.} Replace the single incompressible-flow residual module with modular bias terms tailored to compressible flow, waves, reaction--diffusion systems, and other PDE families.
    \item \textbf{L4 $\rightarrow$ coupled FSI states and longer rollouts.} Move beyond one-step flow reconstruction toward coupled flow--structure prediction, longer-horizon rollout, and uncertainty-aware trajectory modeling on real-world benchmarks.
\end{enumerate}

\section{Conclusion}
\label{sec:conclusion}
We introduced PIBERT as a transformer surrogate that integrates a hybrid Fourier-wavelet spectral encoder, physics-biased self-attention, and self-supervised MPP/ECP pretraining within a single architecture. These components remain the core methodological contribution of PIBERT.

The empirical framing is anchored to the RealPDEBench cylinder-real and FSI-real protocols and replaces contour-only arguments with a dataset-wide evidence hierarchy built from aggregate metrics, component-wise fields, multiscale slice diagnostics, local traces, and explicit cost disclosure. Under these protocols, PIBERT is the best-accuracy model in the comparison: on cylinder-real it reaches All NMSE $0.05875$ and All LPCC $0.97019$, and on FSI-real it reaches All NMSE $0.00026954$ versus $0.00040231$ for the best baseline FourierFlow. Additional supplementary analyses on CFDBench, ICP Plasma, EAGLE, Tube, and Cavity cases support the same interpretation: PIBERT's hybrid spectral encoding and physics-biased attention improve multiscale physical representation beyond global field fitting.

At the same time, we observe PIBERT is not the cheapest method in the comparison set, and strict spectral leadership is not universal even when the overall multiscale reconstruction is strongest. The central conclusion is therefore that PIBERT offers an physical accuracy advantage. It uses additional computation to recover localized and scale-separated flow structures more faithfully than lighter baselines under the studied protocols. Future work should preserve this multiscale reconstruction behavior while reducing cost through mesh-aware extensions, efficient attention mechanisms, PDE-family-specific residual adapters, and coupled FSI and longer-horizon prediction.


\appendix
\section{Extended Proofs}
\label{app:extended-proofs}

\begin{assumption}[2-D DFT/IFFT normalization]\label{asmp:dft-app}
For $x\in\C^{H\times W}$,
$\widehat{x}[k,\ell]=\sum_{m=0}^{H-1}\sum_{n=0}^{W-1} x[m,n]\,
e^{-2\pi i(\frac{mk}{H}+\frac{n\ell}{W})}$ and
$x[m,n]=\frac{1}{HW}\sum_{k,\ell}\widehat{x}[k,\ell]\,
e^{2\pi i(\frac{mk}{H}+\frac{n\ell}{W})}$. For real $x$, $\widehat{x}$ is Hermitian;
we use rFFT/irFFT storing the half-plane.
\end{assumption}

\begin{proposition}[Energy accounting with residual multiplier]\label{prop:residual-energy}
For any filter bank $\{K_s\}$,
$\sum_s\|x*K_s\|_2^2=\frac{1}{HW}\sum_{\omega}\big(\sum_s|\widehat{K}_s(\omega)|^2\big)\,|\widehat{x}(\omega)|^2$.
If $S(\omega)=\sum_s|\widehat{K}_s(\omega)|^2$, then
$\big|\sum_s\|x*K_s\|_2^2-\|x\|_2^2\big|\le \|S-1\|_\infty\,\|x\|_2^2$.
\end{proposition}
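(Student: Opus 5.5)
The plan is to diagonalize everything with the DFT of Assumption~\ref{asmp:dft-app}, apply Parseval, and then compare weighted and unweighted sums of $|\widehat{x}(\omega)|^2$. The filters act by circular convolution on the torus $\mathbb{Z}_H\times\mathbb{Z}_W$, so the convolution theorem applies directly.

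\textbf{Step 1 (convolution theorem).} Under the stated normalization, a direct expansion of the double sum gives
\[
\widehat{x*K_s}(\omega)=\widehat{x}(\omega)\,\widehat{K}_s(\omega)
\]
for every frequency $\omega=(k,\ell)$. This uses only periodicity of the indices and orthogonality of the characters $e^{2\pi i(mk/H+n\ell/W)}$.

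\textbf{Step 2 (Parseval).} With this unnormalized forward transform, Parseval reads
\[
\|y\|_2^2=\frac{1}{HW}\sum_\omega|\widehat{y}(\omega)|^2 .
\]
It follows from the inverse formula together with the character orthogonality relation $\sum_{m,n}e^{2\pi i(m(k-k')/H+n(\ell-\ell')/W)}=HW\,\delta_{kk'}\delta_{\ell\ell'}$.

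Apply this to each $y=x*K_s$ and sum over $s$. The sum over $s$ is finite, so it can be interchanged with the sum over $\omega$. This gives the first identity:
\[
\sum_s\|x*K_s\|_2^2=\frac{1}{HW}\sum_\omega S(\omega)\,|\widehat{x}(\omega)|^2 .
\]
If rFFT half-plane storage is used, the Hermitian symmetry of $\widehat{x}$ and $\widehat{K}_s$ for real data means the full-plane sum is recovered by counting each conjugate pair appropriately. I will simply state the argument on the full plane, which is cleaner.

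\textbf{Step 3 (the bound).} Subtract $\|x\|_2^2=\frac{1}{HW}\sum_\omega|\widehat{x}(\omega)|^2$, again by Parseval, to get
\[
\sum_s\|x*K_s\|_2^2-\|x\|_2^2=\frac{1}{HW}\sum_\omega\bigl(S(\omega)-1\bigr)|\widehat{x}(\omega)|^2 .
\]
Take absolute values, use the triangle inequality, and bound $|S(\omega)-1|\le\|S-1\|_\infty$ uniformly in $\omega$. Then apply Parseval once more to the remaining factor $\frac{1}{HW}\sum_\omega|\widehat{x}(\omega)|^2=\|x\|_2^2$. This yields
\[
\Bigl|\sum_s\|x*K_s\|_2^2-\|x\|_2^2\Bigr|\le\|S-1\|_\infty\,\|x\|_2^2 .
\]
As a by-product, when $S\equiv1$ the right-hand side vanishes, which recovers the exact energy partition of Proposition~\ref{prop:tight-frame}.

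\textbf{Main obstacle.} The only real care needed is bookkeeping of the normalization constant $1/(HW)$ in the convolution theorem and in Parseval under this particular DFT convention, so that the factor appears exactly once. The half-plane (rFFT) versus full-plane accounting is also worth a remark. Neither step is conceptually hard, and I expect the whole argument to be routine.
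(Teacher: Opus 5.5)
Your proof is correct and follows the same route as the paper. You use the convolution theorem and Parseval under the unnormalized DFT of Assumption~\ref{asmp:dft-app} to get the weighted identity, then subtract $\|x\|_2^2$ and bound $|S(\omega)-1|$ uniformly. Your version is also slightly more careful than the paper's, which writes $\bigl|\frac{1}{HW}\sum_\omega (S(\omega)-1)\,|\widehat{x}(\omega)|^2\bigr| = \frac{1}{HW}\sum_\omega |S(\omega)-1|\,|\widehat{x}(\omega)|^2$ as an equality, whereas it holds only as the inequality you obtain from the triangle inequality.
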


\begin{lemma}[Partition of unity]\label{lem:pou}
Let $g_0(\omega)=\cos\theta(\omega)$ and $g_1(\omega)=\sin\theta(\omega)$ with even $C^1$ $\theta:[-\pi,\pi]\to[0,\pi/2]$,
and define separable windows
$\widehat{K}_{\LL}=g_0(\omega_x)g_0(\omega_y)$,
$\widehat{K}_{\LH}=g_0(\omega_x)g_1(\omega_y)$,
$\widehat{K}_{\HL}=g_1(\omega_x)g_0(\omega_y)$,
$\widehat{K}_{\HHH}=g_1(\omega_x)g_1(\omega_y)$.
Then $\sum_{s\in\{\LL,\LH,\HL,\HHH\}}|\widehat{K}_s(\omega_x,\omega_y)|^2\equiv 1$.
\end{lemma}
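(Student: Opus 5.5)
The identity is pointwise in $\omega$. I will fix $(\omega_x,\omega_y)$, abbreviate $a=\theta(\omega_x)$ and $b=\theta(\omega_y)$, and reduce everything to $\cos^2+\sin^2=1$ applied twice.

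Because $\theta$ takes values in $[0,\pi/2]$, both $g_0$ and $g_1$ are real and nonnegative. The moduli can therefore be dropped: $|\widehat K_s|^2$ equals the square of the corresponding product of $g$'s.

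Next I group the four terms by the factor depending on $\omega_x$:
\[
\sum_s|\widehat K_s|^2=\cos^2 a\,(\cos^2 b+\sin^2 b)+\sin^2 a\,(\cos^2 b+\sin^2 b).
\]
Each bracket equals $1$, so the sum is $\cos^2 a+\sin^2 a=1$. Equivalently, the sum factors as $\big(g_0(\omega_x)^2+g_1(\omega_x)^2\big)\big(g_0(\omega_y)^2+g_1(\omega_y)^2\big)$.

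The hypotheses that $\theta$ is even and $C^1$ are not needed for the identity. Evenness makes the windows symmetric in $\omega$, so the kernels $K_s$ are real and even. Smoothness gives spatial decay of the kernels.

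On the discrete torus, the identity only needs to hold at the DFT frequencies $\omega=2\pi k/H$ (and $2\pi\ell/W$ in $y$). It holds at every point of $[-\pi,\pi]^2$, so in particular at these. Combined with Proposition~\ref{prop:residual-energy} this gives $S\equiv 1$, i.e.\ $\|S-1\|_\infty=0$, which yields the Parseval property used in Proposition~\ref{prop:tight-frame}.

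There is no real obstacle. The only point needing care is keeping the per-axis product structure explicit, and noting that nonnegativity of $g_0,g_1$ (or simply real-valuedness) is what allows $|\cdot|^2$ to be replaced by ordinary squares.
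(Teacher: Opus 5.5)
Your proof is correct: since $\theta$ is real-valued, the moduli drop, and the sum factors as $\bigl(g_0(\omega_x)^2+g_1(\omega_x)^2\bigr)\bigl(g_0(\omega_y)^2+g_1(\omega_y)^2\bigr)=1\cdot 1$. The paper gives no separate written proof of this lemma and simply invokes it in the proof of Proposition~\ref{prop:tight-frame}, so your factorization is the intended argument, and you are right that evenness and $C^1$ regularity play no role in the identity itself.
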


\begin{lemma}[Discrete Green identity (periodic)]\label{lem:green-app}
For scalars $f,g$ on the 2-D torus, $\sum f\,(\Delta g)= -\sum \nabla f\cdot \nabla g$ with central differences.
\end{lemma}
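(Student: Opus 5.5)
We prove the periodic Green identity by discrete summation by parts, one coordinate direction at a time. We take the central-difference conventions so that the identity holds exactly: the gradient is $(\nabla f)_{m,n}=(D_x f, D_y f)$ with $D_x f[m,n]=(f[m+1,n]-f[m-1,n])/(2h)$ (and analogously $D_y$). The Laplacian is the composition $\Delta=D_x D_x + D_y D_y$, i.e.\ the wide five-point stencil with spacing $2h$. Indices are taken mod $H$ and mod $W$, and sums run over the whole torus. With the compact five-point Laplacian instead, the matching gradient is the forward difference $D_x^+$, and the same argument goes through with $D_x^+$ and its adjoint $-D_x^-$.

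The key step is the discrete adjoint relation $\sum_{m,n} f\,(D_x u) = -\sum_{m,n} (D_x f)\,u$ for any periodic $f,u$. To show it, expand $D_x u[m,n]$ into its two shifted terms. In $\sum_m f[m,n]\,u[m+1,n]$, reindex $m\mapsto m-1$; in $\sum_m f[m,n]\,u[m-1,n]$, reindex $m\mapsto m+1$. Periodicity makes both reindexings bijections of $\mathbb{Z}_H$, so no boundary terms appear. After reindexing, the two sums become $\sum_m u[m,n]\,f[m-1,n]$ and $\sum_m u[m,n]\,f[m+1,n]$, and their difference gives exactly $-\sum (D_x f)\,u$. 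The same computation applies to $D_y$ over $\mathbb{Z}_W$.

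Apply the adjoint relation with $u=D_x g$ and again with $u=D_y g$, and add the two:
\[
\sum f\,\Delta g=\sum f\,D_x(D_x g)+\sum f\,D_y(D_y g)=-\sum D_xf\,D_xg-\sum D_yf\,D_yg=-\sum\nabla f\cdot\nabla g .
\]
This is the claimed identity.

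The only real obstacle is making the stencil conventions consistent. The identity is exact only when $\Delta$ is literally the composition of the adjoint of the gradient with the gradient. For this reason we state explicitly which Laplacian accompanies the central-difference gradient. Under mixed conventions, the identity would hold only up to an $O(h^2)$ consistency error, which we would note in a remark rather than prove.
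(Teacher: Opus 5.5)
Your proof is correct and is essentially the paper's argument: the paper sets $\Delta=-(D_x^\top D_x+D_y^\top D_y)$, which by the periodic skew-adjointness $D_x^\top=-D_x$, $D_y^\top=-D_y$ is exactly your wide-stencil Laplacian $D_xD_x+D_yD_y$, and then moves one difference operator across the sum. The only difference is that you verify $D_x^\top=-D_x$ explicitly by reindexing on $\mathbb{Z}_H$, so that no boundary terms arise, whereas the paper simply asserts it.
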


\begin{proposition}[Quadratic boundary penalty enforces Dirichlet]\label{prop:dirichlet-penalty}
Let $J_\mu(u)=\|Au-f\|_2^2+\mu\|Bu-g\|_2^2$. Any minimizer $u_\mu$ converges, as $\mu\to\infty$, to the least-squares
solution of $Au=f$ subject to $Bu=g$.
\end{proposition}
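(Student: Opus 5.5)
The statement is Proposition~\ref{prop:dirichlet-penalty}. I would treat it as a standard penalty-method argument in finite dimensions. Write $u_\mu$ for any minimizer of $J_\mu$. I take the intended limit object to be a minimizer $u^\star$ of $\|Au-f\|_2^2$ over the affine set $\mathcal{C}=\{u: Bu=g\}$. I assume $\mathcal{C}\neq\emptyset$, i.e.\ $g\in\operatorname{range}(B)$. To make ``converges to \emph{the} solution'' meaningful, I also assume uniqueness, i.e.\ $\ker A\cap\ker B=\{0\}$. Without this, the minimizer $u_\mu$ itself is not unique, and I would instead prove convergence of the minimum-norm minimizer, or show that every cluster point of $(u_\mu)$ is a constrained least-squares solution.

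\textbf{Step 1 (comparison bounds).} Since $u^\star\in\mathcal{C}$, we have $J_\mu(u_\mu)\le J_\mu(u^\star)=\|Au^\star-f\|_2^2=:m^\star$. This gives two consequences:
\begin{itemize}
\item $\|Au_\mu-f\|_2^2\le m^\star$;
\item $\mu\|Bu_\mu-g\|_2^2\le m^\star$, hence $\|Bu_\mu-g\|_2\le (m^\star/\mu)^{1/2}\to0$.
\end{itemize}

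\textbf{Step 2 (boundedness).} Consider the stacked operator $\begin{pmatrix}A\\B\end{pmatrix}$. Under the uniqueness assumption it is injective, so it has a positive smallest singular value $\sigma$. Then
\[
\sigma\|u_\mu\|_2 \le \|Au_\mu\|_2+\|Bu_\mu\|_2 \le \|f\|_2+\sqrt{m^\star}+\|g\|_2+o(1),
\]
so $(u_\mu)$ is bounded.

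\textbf{Step 3 (identify cluster points).} Take any cluster point $\bar u$ along a subsequence $\mu_k\to\infty$. By continuity and Step~1, $B\bar u=g$ and $\|A\bar u-f\|_2^2\le m^\star$. Since $\bar u\in\mathcal{C}$, it is therefore a constrained minimizer. By uniqueness, $\bar u=u^\star$. Since the sequence is bounded and every cluster point equals $u^\star$, the whole family satisfies $u_\mu\to u^\star$.

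\textbf{Rate (optional).} I would add a rate via the normal equations $(A^\top A+\mu B^\top B)u_\mu=A^\top f+\mu B^\top g$, compared with the KKT system of the constrained problem. The KKT system gives a multiplier $\eta$ with $A^\top(Au^\star-f)+B^\top\eta=0$. Subtracting the two systems yields $\|u_\mu-u^\star\|_2=O(1/\mu)$.

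\textbf{Main obstacle.} The main difficulty is the well-posedness caveat rather than any computation. If $\ker A\cap\ker B\neq\{0\}$, then $u_\mu$ can drift along that kernel and need not converge. In that case I would either state the uniqueness hypothesis explicitly, or restrict to minimum-norm minimizers. With the minimum-norm choice, $u_\mu\in(\ker A\cap\ker B)^\perp$ for all $\mu$, so Step~2 runs on that complement and restores boundedness. Step~3 then selects the minimum-norm constrained solution.
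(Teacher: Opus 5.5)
Your proof is correct and is the same quadratic-penalty limit-point argument the paper only sketches (normal equations, then ``$Bu_\mu\to g$ and $Au_\mu\to f$; any limit point solves the constrained problem''). Yours is the complete version: you add the boundedness step the paper omits, and you use $\|Au_\mu-f\|_2^2\le m^\star$ in place of the paper's claim $Au_\mu\to f$, which is false whenever the constrained residual is nonzero. Your hypothesis $\ker A\cap\ker B=\{0\}$ (or the minimum-norm selection) is genuinely needed for ``any minimizer converges.'' One caveat on the optional rate: subtracting the systems gives $(A^\top A+\mu B^\top B)(u_\mu-u^\star)=B^\top\eta$, and the immediate energy estimate from this yields only $O(\mu^{-1/2})$. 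Getting $O(1/\mu)$ needs a short extra step, e.g.\ writing $B^\top\eta=B^\top Bw$ and showing that $\mu(u_\mu-u^\star)-w$ stays bounded.
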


\subsection{Fourier branch: $1$-Lipschitz and isometry}
\begin{proof}[Proof of \Cref{prop:fourier-energy}]
Adopt \Cref{asmp:dft-app}. Let $\widehat{x}(\omega)\in\C^{C}$ be the channel vector at frequency $\omega=(k,\ell)$.
The layer acts per-mode as
\[
\widehat{y}(\omega)=
\begin{cases}
W(\omega)\,\widehat{x}(\omega), & \omega\in\Omega_{\rm keep},\\
0, & \text{otherwise},
\end{cases}
\quad W(\omega)^\top W(\omega)=I .
\]
By Parseval,
$\|y\|_2^2=\tfrac{1}{HW}\sum_{\omega}\|\widehat{y}(\omega)\|_2^2
=\tfrac{1}{HW}\sum_{\omega\in\Omega_{\rm keep}}\|W(\omega)\widehat{x}(\omega)\|_2^2
=\tfrac{1}{HW}\sum_{\omega\in\Omega_{\rm keep}}\|\widehat{x}(\omega)\|_2^2
\le \tfrac{1}{HW}\sum_{\omega}\|\widehat{x}(\omega)\|_2^2=\|x\|_2^2.$
Hence the operator norm is $\le 1$ (nonexpansive). If $x$ is band-limited to
$\Omega_{\rm keep}$ then the inequality is an equality, i.e., an isometry.
\end{proof}

\subsection{Residual energy accounting and tight frame}
\begin{proof}[Proof of \Cref{prop:residual-energy}]
For any filter $K_s$, Parseval gives
$\|x * K_s\|_2^2=\tfrac{1}{HW}\sum_{\omega} |\widehat{K}_s(\omega)|^2 |\widehat{x}(\omega)|^2.$
Summing $s$ yields the stated identity and
\[
\Big|\sum_s\|x*K_s\|_2^2-\|x\|_2^2\Big|
=\frac{1}{HW}\sum_{\omega}\big|S(\omega)-1\big|\ |\widehat{x}(\omega)|^2
\le \|S-1\|_\infty\,\|x\|_2^2 .
\]
\end{proof}

\begin{proof}[Proof of \Cref{prop:tight-frame}]
By \Cref{lem:pou}, $\sum_s |\widehat{K}_s|^2\equiv 1$. Thus
$\sum_s\|x*K_s\|_2^2=\|x\|_2^2$ by Parseval.
Let $A$ be the analysis map $x\mapsto (K_s*x)_s$ and $S$ the synthesis
$S(y_s)=\sum_s K_s^\vee * y_s$. In the DFT basis, $A^\ast A$ has symbol
$\sum_s|\widehat{K}_s|^2\equiv 1$, hence $A$ is an isometry and $S A=I$; i.e.,
$x=\sum_s K_s^\vee*(K_s*x)$. Both analysis and synthesis are $1$-Lipschitz.
\end{proof}

\subsection{Hybrid fusion nonexpansiveness}
\begin{proof}[Proof of \Cref{prop:hybrid-nonexp}]
Let $\mathcal{F},\mathcal{W}$ satisfy $\|\mathcal{F}\|\le 1$, $\|\mathcal{W}\|\le 1$ and
$G_\alpha=\alpha\mathcal{F}+(1-\alpha)\mathcal{W}$ with $\alpha\in[0,1]$ (pointwise or spatially varying).
For any $x,y$,
\[
\|G_\alpha x - G_\alpha y\|
\le \alpha\|\mathcal{F}(x-y)\|+(1-\alpha)\|\mathcal{W}(x-y)\|
\le \alpha\|x-y\|+(1-\alpha)\|x-y\|=\|x-y\|.
\]
If both branches are isometries on the relevant subspace (e.g., band-limited input and $\mathcal{W}=I$),
then $\|G_\alpha x\|=\|x\|$.
\end{proof}

\subsection{Biased attention: ratio and Lipschitz bounds}
\begin{proof}[Proof of \Cref{lem:ratio}]
For a fixed row $i$,
$\alpha_{ij}=\exp(\tilde L_{ij})/\sum_m \exp(\tilde L_{im})$ with
$\tilde L_{ij}=L_{ij}-\lambda_{\rm att}R_{ij}$. Then
\[
\frac{\alpha_{ij_1}}{\alpha_{ij_2}}
=\exp\!\big(\tilde L_{ij_1}-\tilde L_{ij_2}\big)
=\exp\!\big((L_{ij_1}-L_{ij_2})-\lambda_{\rm att}(R_{ij_1}-R_{ij_2})\big),
\]
which is strictly decreasing in $\lambda_{\rm att}$ whenever $R_{ij_1}>R_{ij_2}$.
\end{proof}

\begin{proof}[Proof of \Cref{lem:lipschitz}]
Let $\alpha(\lambda)=\mathrm{softmax}(z-\lambda r)$ with row vectors $z,r$.
The Jacobian of softmax at $u$ is $J(u)=\mathrm{Diag}(\sigma(u))-\sigma(u)\sigma(u)^\top$.
By the mean value theorem,
$\alpha(\lambda)-\alpha(0)=\int_0^\lambda J(z-t r)(-r)\,dt$.
Using the operator norm $\|\cdot\|_{\infty\to 1}$,
$\|\alpha(\lambda)-\alpha(0)\|_1 \le \int_0^\lambda \|J(\cdot)\|_{\infty\to 1}\,dt\,\|r\|_\infty.$
One checks (e.g., by column sums) $\|J(\cdot)\|_{\infty\to 1}=\max_j 2\sigma_j(1-\sigma_j)\le \tfrac12$,
hence $\|\alpha(\lambda)-\alpha(0)\|_1\le \frac{\lambda}{2}\|r\|_\infty$.
Apply rowwise with $r=R_{i\cdot}$ and $\lambda=\lambda_{\rm att}$.
\end{proof}

\subsection{Translation equivariance and continuum limit}
\begin{proof}[Proof of \Cref{prop:equivariance}]
Let $\tau_s$ be the lattice shift by $s$. If $R_{ij}=\rho(p,r(i)-r(j))$, then
$L_{ij}$ and $R_{ij}$ shift compatibly: $L\circ\tau_s=\Pi_s L \Pi_s^\top$ and likewise for $R$,
with permutation matrix $\Pi_s$. Rowwise softmax commutes with the same permutation, so
$\alpha(\tau_s x)=\Pi_s \alpha(x) \Pi_s^\top$. Hence the mapping is translation-equivariant.
\end{proof}

\begin{proof}[Proof of \Cref{prop:continuum}]
Assume a periodic, compact $\Omega$ and bounded continuous $q(\cdot),k(\cdot),r(\cdot,\cdot)$.
On a grid of spacing $h$, the row $i$ softmax weights are
$w_h(x_i,y_j)=\frac{\exp(\langle q(x_i),k(y_j)\rangle-\lambda r(x_i,y_j))}{\sum_m \exp(\langle q(x_i),k(y_m)\rangle-\lambda r(x_i,y_m))}$.
Then $(T_h v)(x_i)=\sum_j w_h(x_i,y_j) v(y_j)\,h^d$ is a Riemann sum for
$(Tv)(x)=\int_\Omega w_\lambda(x,y) v(y)\,dy$ with the same normalized exponential kernel.
Uniform boundedness and continuity yield uniform convergence by dominated convergence; the normalization
enforces $\int w_\lambda(x,y)\,dy=1$.
\end{proof}

\subsection{Discrete Green identity and quadratic penalty}
\begin{proof}[Proof of \Cref{lem:green-app}]
For periodic central differences, $D_x^\top=-D_x$ and $D_y^\top=-D_y$.
With $\Delta=-(D_x^\top D_x+D_y^\top D_y)$,
\[
\sum f\,(\Delta g)= -\sum f\,D_x^\top D_x g - \sum f\,D_y^\top D_y g
= -\sum (D_x f)\,(D_x g) - \sum (D_y f)\,(D_y g).
\]
\end{proof}

\begin{proof}[Proof of \Cref{prop:dirichlet-penalty}]
The minimizer $u_\mu$ satisfies the normal equations
$(A^\top A+\mu B^\top B)u_\mu=A^\top f+\mu B^\top g$.
If $u^\star$ solves $Au=f$ with $Bu=g$ (in the least-squares sense), then
$u_\mu\to u^\star$ as $\mu\to\infty$ by standard quadratic-penalty arguments:
$B u_\mu\to g$ and $A u_\mu\to f$; any limit point solves the constrained problem.
\end{proof}

\subsection{Divergence-aware oracle inequality}
\begin{proof}[Proof of \Cref{prop:oracle-div}]
Define $\mathcal{R}_\lambda(g)=\E\|x-g\|_2^2+\lambda \E\|Dg\|_2^2$ and let $g_\lambda$ be a minimizer.
For any $g^\star\in\ker D$,
$\E\|x-g_\lambda\|^2+\lambda\E\|Dg_\lambda\|^2 \le \E\|x-g^\star\|^2$.
Rearrange to obtain
$\E\|x-g_\lambda\|^2 \le \E\|x-g^\star\|^2 - \lambda \E\|Dg_\lambda\|^2$.
Now, by $\dist(u,\ker D)\le c_H\|Du\|_2$, take $u=g_\lambda(\tilde{x})$ pointwise and average to get
$\E \dist(g_\lambda,\ker D)^2 \le c_H^2 \E\|D g_\lambda\|^2$.
Using $\dist(a,\mathcal{H})^2\le \|a-b\|^2$ with $b\in\mathcal{H}$ and choosing $b=g^\star$ yields
\[
\E\|x-g_\lambda\|^2 \le \E\|x-g^\star\|^2 + \frac{c_H^2}{\lambda}\,\E\|D g_\lambda\|^2,
\]
which matches the stated bound.
\end{proof}

\section{Fourier and Wavelet Derivatives}
\label{app:fourier-wavelet-derivs}

\paragraph{Fourier branch}
Let $y=\IFFT\big(\widehat{y}\big)$ with $\widehat{y}(\omega)=W(\omega)\,\widehat{x}(\omega)$
on the kept half-plane and $\widehat{y}(\omega)=0$ otherwise. For a real $x$ we store the rFFT
half-plane and enforce Hermitian symmetry.

Given an upstream spatial gradient $g=\partial\mathcal{L}/\partial y$ and its DFT
$\widehat{g}=\FFT(g)$:
\[
\frac{\partial \mathcal{L}}{\partial \widehat{x}(\omega)}
= W(\omega)^{\HH}\,\widehat{g}(\omega),\qquad
\frac{\partial \mathcal{L}}{\partial W(\omega)}
= \widehat{g}(\omega)\,\widehat{x}(\omega)^{\HH},\quad \omega\in\Omega_{\rm keep}.
\]
The spatial gradient is $\partial\mathcal{L}/\partial x=\IFFT\big(\partial\mathcal{L}/\partial \widehat{x}\big)$.
For rFFT, mirror the gradients to satisfy Hermitian symmetry and zero out non-kept modes.

\paragraph{Tight-frame branch}
Analysis coefficients $z_s=K_s * x$ and (optionally) synthesis $\tilde x=\sum_s K_s^\vee * z_s$.
For any branch loss $\mathcal{L}(z_s)$ with upstream gradients $h_s=\partial\mathcal{L}/\partial z_s$:
\[
\frac{\partial \mathcal{L}}{\partial x}=\sum_s K_s^\vee * h_s,\qquad
\frac{\partial \mathcal{L}}{\partial K_s}=h_s * x^\vee.
\]
In the Fourier domain:
\[
\frac{\partial \mathcal{L}}{\partial \widehat{x}}=\sum_s \overline{\widehat{K}_s}\,\Had\,\widehat{h}_s,\qquad
\frac{\partial \mathcal{L}}{\partial \widehat{K}_s}=\overline{\widehat{x}}\ \Had\ \widehat{h}_s.
\]

\paragraph{Wavelet windows}
With $\widehat{K}_{\LL}=g_0(\omega_x)g_0(\omega_y)$,
$\widehat{K}_{\LH}=g_0(\omega_x)g_1(\omega_y)$,
$\widehat{K}_{\HL}=g_1(\omega_x)g_0(\omega_y)$,
$\widehat{K}_{\HH}=g_1(\omega_x)g_1(\omega_y)$, $g_0=\cos\theta$, $g_1=\sin\theta$:
\[
\frac{\partial g_0}{\partial \theta}=-\sin\theta,\qquad
\frac{\partial g_1}{\partial \theta}=\cos\theta,\qquad
\frac{\partial \widehat{K}_{\LL}}{\partial \theta_x}=\big(-\sin\theta_x\big)g_0(\omega_y),\ \ldots
\]
Chain with $\partial\theta/\partial\phi$ if $\theta(\omega;\phi)$ is parametrized.
To preserve the partition-of-unity \(\sum_s|\widehat{K}_s|^2\equiv 1\), either (i) parametrize via a single
angle field $\theta(\omega)$ as above, or (ii) renormalize $K_s$ by $S(\omega)^{-1/2}$ with
$S=\sum_s|\widehat{K}_s|^2$ after each update.

\paragraph{Gating and fusion}
For $E=\alpha_F F+(1-\alpha_F)W$ with scalar or spatially varying $\alpha_F=\mathrm{softmax}(\gamma_F,\gamma_W)$,
the gradients are
\[
\frac{\partial\mathcal{L}}{\partial F}=\alpha_F\,\frac{\partial\mathcal{L}}{\partial E},\quad
\frac{\partial\mathcal{L}}{\partial W}=(1-\alpha_F)\,\frac{\partial\mathcal{L}}{\partial E},\quad
\frac{\partial\mathcal{L}}{\partial \gamma_F}=\alpha_F(1-\alpha_F)\,\Big\langle
\frac{\partial\mathcal{L}}{\partial E},\,F-W\Big\rangle.
\]

\paragraph{Notes on rFFT bookkeeping}
(i) Handle DC/NYQUIST lines once (no mirroring). (ii) When enforcing column-unitarity on $W(\omega)$,
apply it only on the kept set; set others to zero. (iii) Gradients on mirrored bins must be conjugate.

\section{Benchmark Provenance and Local Learning Pipeline}
\label{sec:pde-data}

This appendix documents only the two RealPDEBench benchmarks used in the manuscript \cite{hu2026realpdebench}.

\paragraph{Provenance note}
Benchmark-source numerical provenance is reported only as provided by the RealPDEBench release \cite{hu2026realpdebench}. The local learning pipeline then uses the released benchmark fields and explicitly states any tensorization, resizing, or channel-selection choices made by the local experiments. When the source benchmark does not document solver order, mesh order, or related numerical-scheme details, we do not invent them here.
The benchmark inventory is summarized in \Cref{tab:app-benchmark-summary}, the source-versus-local separation is detailed in \Cref{tab:app-benchmark-source-details}, model metadata are collected in \Cref{tab:app-run-settings}, and the two  benchmark schematics are collected in \Cref{fig:app-benchmark-schematics}. The individual panels \Cref{fig:app-cylinder-summary,fig:app-fsi-schematic} correspond to the cylinder-real and FSI-real cases, respectively.

\begin{table}[htbp]
\centering
\caption{Appendix benchmark summary for the two RealPDEBench benchmarks studied in this paper.}
\label{tab:app-benchmark-summary}
\resizebox{\linewidth}{!}{%
\begin{tabular}{p{2.2cm}p{3.3cm}p{3.2cm}p{3.6cm}p{3.4cm}}
\toprule
\textbf{Benchmark} & \textbf{Source benchmark scenario} & \textbf{Source channels / grid} & \textbf{Local learning task} & \textbf{Split used here} \\
\midrule
Cylinder-real & Real bluff-body wake benchmark in RealPDEBench & Real PIV $(u,v)$, with paired simulated $(u,v,p)$ available in the official package; benchmark description reports real $128\times 256$ and simulated $64\times 128$ & One-step prediction of real $(u,v)$ after local decoding, stride-20 sampling, and resizing to $64\times 64$ & Seed-42 split with $73/9/10$ trajectories and $14{,}527/1{,}791/1{,}990$ train/val/test pairs \\
FSI-real & Official tandem-cylinder vortex-induced-vibration benchmark with real observations & Released real flow fields at $128\times 128$ with benchmark metadata for the physical setting & One-step prediction of real-only $(u,v)$ after resizing to $64\times 64$ & Official real split shared across all six models \\
\bottomrule
\end{tabular}}
\end{table}

\begin{table}[htbp]
\centering
\caption{Appendix provenance details separating benchmark-source facts from the local learning pipeline.}
\label{tab:app-benchmark-source-details}
\resizebox{\linewidth}{!}{%
\begin{tabular}{p{2.2cm}p{4.1cm}p{4.0cm}p{4.5cm}}
\toprule
\textbf{Benchmark} & \textbf{Benchmark-source facts used in the manuscript} & \textbf{Local learning pipeline used in the experiments} & \textbf{What is intentionally not claimed} \\
\midrule
Cylinder-real & $92$ trajectories, $23{,}990$ frames, $20$~s at $400$~Hz, Reynolds numbers $1800$--$12000$, benchmark-level real $(u,v)$ and paired simulated $(u,v,p)$ modalities \cite{hu2026realpdebench} & Released real velocity tensors decode to $(3990,2,64,128)$ per trajectory in the local experiments, are sampled every 20 native steps, resized to $(2,64,64)$, and paired as $x_t \mapsto x_{t+1}$. The local parameter branch is zero-filled when a benchmark parameter JSON is not present in the released dataset. & We do not claim solver order, mesh order, or undisclosed CFD-generation details for the source benchmark because these are not specified in the official release. \\
FSI-real & Official real split of the tandem-cylinder vortex-induced-vibration benchmark, with released physical metadata and $128\times 128$ source fields \cite{hu2026realpdebench} & Local study uses the released real-only velocity channels $(u,v)$, resized to $64\times 64$, under the same six-model next-step prediction protocol. & We do not infer missing discretization or solver details beyond the source benchmark documentation, and we do not reinterpret the released physical metadata beyond what the benchmark states. \\
\bottomrule
\end{tabular}}
\end{table}
\begin{table}[htbp]
\centering
\caption{Appendix optimization-cost for the FSI-real comparison set.}
\label{tab:fsi-cost}
\resizebox{\linewidth}{!}{%
\begin{tabular}{lcccc}
\toprule
\tblhead{Model} & \tblhead{Parameters} & \tblhead{Checkpoint MB} & \tblhead{Last logged epoch (s)} & \tblhead{Wall time (min)} \\
\midrule
PIBERT & 8,714,248 & 99.51 & 49.2 & 13.9 \\
FourierFlow & 2,777,282 & 31.85 & 9.6 & 27.7 \\
FNO2d & 1,054,082 & 12.12 & 7.7 & 22.8 \\
PITT & 216,898 & 2.54 & 8.1 & 23.1 \\
DeepONet2d & 150,674 & 1.77 & 3.1 & 8.4 \\
PINN & 21,250 & 0.29 & 2.9 & 6.7 \\
\bottomrule
\end{tabular}}
\end{table}

\paragraph{Model metadata}
The experiment preserves parameter counts for all six comparison models, while the explicit hyperparameter block is reported for the finalized PIBERT configuration. We therefore report a hyperparameter table in \Cref{tab:app-run-settings}.

\begin{table}[htbp]
\centering
\caption{PIBERT hyperparameters for FSI and Cylinder. All reported configurations use the same model size of $1.560$M parameters.}
\label{tab:app-run-settings}
\begin{tabularx}{\linewidth}{p{2.5cm}Y}
\toprule
\tblhead{Group} & \tblhead{Hyperparameters / values} \\
\midrule
Architecture & channels $2\!\to\!2$; image size $64$; patch size $4$; embedding width $128$; depth $4$; heads $4$; MLP ratio $4.0$; Fourier modes $16$; parameter-token dim $16$; residual skip on; refinement hidden $32$ \\
Physics bias & $\lambda_{\mathrm{att}}=0.12$; $\alpha_{\mathrm{div}}=1.0$; $\alpha_{\mathrm{mom}}=1.0$; dropout $0.02$ \\
Loss weights & $\lambda_{\mathrm{div}}=1.0$; $\lambda_{\mathrm{lap}}=0.12$; $\lambda_{\mathrm{bnd}}=0.002$; $\lambda_{\mathrm{reg}}=5\times 10^{-5}$; $\lambda_{\mathrm{MPP}}=1.0$; $\lambda_{\mathrm{ECP}}=0.1$ \\
Optimization & batch size $256$; micro-batch size $128$; learning rate $5\times 10^{-5}$; minimum learning rate $1\times 10^{-7}$; warmup $3$; pretraining $50$ epochs; fine-tuning $120$ epochs; patience $45$; mask ratio $0.15$ \\
Data protocol & process cylinder-real; data type real; image size $64$; stride $20$; max frames per trajectory $200$; train/val ratios $0.8/0.1$; normalization on \\
\bottomrule
\end{tabularx}
\end{table}

\begin{figure}[t]
\centering
\begin{subfigure}[b]{0.48\textwidth}
\centering
\begin{tikzpicture}[scale=0.82, >=Stealth, every node/.style={font=\scriptsize}]
  \fill[gray!4] (0,0) rectangle (7.1,3.8);
  \draw[thick] (0,0) rectangle (7.1,3.8);
  \node[anchor=west, font=\small\bfseries] at (0.35,3.35) {Cylinder-real};
  \filldraw[fill=gray!35, draw=black, thick] (1.95,1.9) circle (0.44);
  \draw[->, blue, very thick] (-0.75,1.25) -- (0,1.25);
  \draw[->, blue, very thick] (-0.75,1.9) -- (0,1.9);
  \draw[->, blue, very thick] (-0.75,2.55) -- (0,2.55);
  \node[left, blue] at (-0.8,1.9) {$U_{\mathrm{in}}$};
  \draw[->, thick, black!60, dashed] (2.85,1.9) -- (5.55,1.9);
  \node[above, black!65] at (4.2,1.9) {flow direction};
  \draw[->, red!80!black, thick] (5.95,1.35) -- (6.95,1.35);
  \draw[->, red!80!black, thick] (5.85,1.9) -- (6.95,1.9);
  \draw[->, red!80!black, thick] (5.95,2.45) -- (6.95,2.45);
  \node[right, red!80!black] at (6.98,1.9) {wake};
  \draw[black!35, dashed] (2.45,1.55) .. controls (3.3,1.15) and (4.35,1.15) .. (5.4,1.5);
  \draw[black!35, dashed] (2.45,2.25) .. controls (3.3,2.65) and (4.35,2.65) .. (5.4,2.3);
\end{tikzpicture}
\caption{Simplified cylinder-real wake schematic used as a visual benchmark summary. Detailed source and local-protocol metadata are reported in \Cref{tab:app-benchmark-summary,tab:app-benchmark-source-details}.}
\label{fig:app-cylinder-summary}
\end{subfigure}\hfill
\begin{subfigure}[b]{0.48\textwidth}
\centering
\begin{tikzpicture}[scale=0.82, >=Stealth, every node/.style={font=\scriptsize}]
  \fill[gray!4] (0,0) rectangle (7.4,3.8);
  \draw[thick] (0,0) rectangle (7.4,3.8);
  \node[anchor=west, font=\small\bfseries] at (0.35,3.35) {FSI-real};
  \filldraw[fill=gray!35, draw=black, thick] (2.1,1.9) circle (0.44);
  \filldraw[fill=gray!35, draw=black, thick] (4.25,1.9) circle (0.44);
  \draw[->, blue, very thick] (-0.75,1.25) -- (0,1.25);
  \draw[->, blue, very thick] (-0.75,1.9) -- (0,1.9);
  \draw[->, blue, very thick] (-0.75,2.55) -- (0,2.55);
  \node[left, blue] at (-0.8,1.9) {$U_{\mathrm{in}}$};
  \draw[<->, purple!85!black, very thick] (4.25,0.95) -- (4.25,2.85);
  \node[anchor=west, align=left, text=purple!85!black] at (4.7,2.7) {transverse\\vibration};
  \draw[->, thick, black!60, dashed] (4.95,1.9) -- (6.45,1.9);
  \node[above, black!65] at (5.85,1.9) {wake interaction};
  \draw[black!35, dashed] (2.55,1.55) .. controls (3.15,1.25) and (3.55,1.25) .. (3.95,1.55);
  \draw[black!35, dashed] (2.55,2.25) .. controls (3.15,2.55) and (3.55,2.55) .. (3.95,2.25);
\end{tikzpicture}
\caption{Simplified FSI-real tandem-cylinder schematic highlighting downstream-body vibration and wake interaction. Detailed benchmark metadata are reported in \Cref{tab:app-benchmark-summary,tab:app-benchmark-source-details}.}
\label{fig:app-fsi-schematic}
\end{subfigure}
\caption{Benchmark schematics used in the manuscript. Left: simplified cylinder-real wake configuration. Right: simplified tandem-cylinder vortex-induced-vibration FSI-real configuration. These panels are visual summaries only; benchmark-source metadata and local learning-pipeline details are reported in \Cref{tab:app-benchmark-summary,tab:app-benchmark-source-details}.}
\label{fig:app-benchmark-schematics}
\end{figure}

\subsection*{Local learning pipeline}

\paragraph{Cylinder-real local protocol}
The cylinder-real experiments use trajectory-level splits with seed $42$. Each decoded trajectory contains $3990$ native frames with two velocity channels and spatial size $64\times 128$. The local benchmark script samples every 20 native steps, keeps at most 200 frames per trajectory, resizes each frame to $64\times 64$, and forms consecutive one-step prediction pairs on the sampled sequence. This produces $14{,}527$ training pairs, $1{,}791$ validation pairs, and $1{,}990$ test pairs.

\paragraph{FSI-real local protocol}
The FSI-real experiments follow the official real split and use the released real-only velocity channels $(u,v)$ as the shared supervised target. For uniform comparison across PIBERT, FourierFlow, FNO2d, PITT, DeepONet2d, and PINN, the local protocol resizes the source fields to $64\times 64$ and evaluates all models on the same next-step tensorized prediction task.

\bibliographystyle{elsarticle-num}
\bibliography{references}

\end{document}